\newcommand{\OUTPUT}{\STATE \textbf{output~}}
\newcommand{\RET}{\STATE \textbf{return~}}
\newcommand{\rank}{\mathtt{rank}}
\newcommand{\access}{\mathtt{access}}
\newcommand{\select}{\mathtt{select}}
\newcommand{\rc}{\mathtt{count}}
\newcommand{\rr}{\mathtt{report}}
\newcommand{\libcds}{{\sc Libcds}}
\newcommand{\B}{\tilde{B}}
\newcommand{\no}[1]{}
\newtheorem{lem}{Lemma}
\title{Efficient Compressed Wavelet Trees over Large Alphabets
\thanks{An early partial versions of this article appeared in 
{\em Proc. SPIRE 2012} \cite{CN12}.}}
\date{}
\author{%
\begin{tabular}{ccc}
  Francisco Claude 
& Gonzalo Navarro\thanks{Funded in part by Millennium Nucleus Information and 
	Coordination in Networks ICM/FIC P10-024F, Chile.}
& Alberto Ord\'o\~nez\thanks{Xunta de Galicia (co-funded with FEDER), 
	ref. 2010/17 and CN 2012/211 (strategic group), and by 
	the Spanish MICINN ref. AP2010-6038 (FPU Program)} \\
  Esc. Inf. \& Tel. 
& Dept. of Computer Science 
& Database Laboratory \\
  Univ. Diego Portales, Chile
& Univ. of Chile, Chile
& Univ. da Coru\~na, Spain \\
  {\tt fclaude@recoded.cl} 
& {\tt gnavarro@dcc.uchile.cl} 
& {\tt alberto.ordonez@udc.es}
\end{tabular}}
\begin{document}
\maketitle

\begin{abstract}
The {\em wavelet tree} is a flexible data structure that permits representing
sequences $S[1,n]$ of symbols over an alphabet of size $\sigma$, within 
compressed space and supporting a wide range of operations on $S$. When
$\sigma$ is significant compared to $n$, current wavelet tree representations
incur in noticeable space or time overheads. In this article we introduce
the {\em wavelet matrix}, an alternative representation for large alphabets
that retains all the properties of wavelet trees but is significantly 
faster. We also show how
the wavelet matrix can be compressed up to the zero-order entropy of the 
sequence without sacrificing, and actually improving, its time performance.
Our experimental results show that the wavelet matrix outperforms all the
wavelet tree variants along the space/time tradeoff map.
\end{abstract}

\section{Introduction}

In many applications related to text indexing and succinct data structures,
it is necessary to represent a sequence $S[1,n]$ over an integer
alphabet $[0,\sigma)$ so as to support the following functionality:

\begin{itemize}
\item $\access(S,i)$ returns $S[i]$.
\item $\rank_a(S,i)$ returns the number of occurrences of symbol $a$ in
$S[1,i]$.
\item $\select_a(S,j)$ returns the position in $S$ of the $j$-th occurrence of
symbol $a$.
\end{itemize}

Some examples where this problem arises are 
indexes for supporting indexed pattern matching on strings 
	\cite{GGV03,GV05,FM05,FMMN07,NM07}, 
indexes for solving computational biology problems on sequences 
	\cite{SOG10,BGOS11},
simulation of inverted indexes over natural language text collections 
	\cite{BFLN12,AGO10}, 
representation of labeled trees and XML structures 
	\cite{BDMRRR05,BCAHM07,FLMM09,ACMNNSV10,BHMR11},
representation of binary relations and graphs
	\cite{BGMR07,CN10,BCN10,BHMR11},
solving document retrieval problems
	\cite{VM07,GNP11},
and many more.

An elegant data structure to solve this problem is the {\em wavelet tree}
\cite{GGV03}. In its most basic form, this is a 
balanced tree of $O(\sigma)$ nodes storing bitmaps. It requires
$n\lg\sigma + o(n\lg\sigma) + O(\sigma\lg n)$ bits to represent $S$ and 
solves the three queries in time $O(\lg\sigma)$. The wavelet tree supports
not only the three queries we have mentioned, but more general
range search operations that find applications in representing geometric
grids \cite{Cha88,BHMM09,BLNS10,BCN10,NNR13} and text indexes based on them
\cite{Nav04,FM05,MN07,CHSV08,CNfi10,KN12,NN12}, complex queries on numeric 
sequences \cite{GPT09,KN12,GKNP12}, and many others. Various recent surveys
\cite{NM07,FGM09,GVX11,Mak12,Nav12} are dedicated, partially or totally, to 
the number of applications of this versatile data structure.

In various applications, the alphabet size $\sigma$ is significant compared
to the length $n$ of the sequence. Some examples are sequences of words 
(seen as integer tokens) when simulating inverted indexes, sequences of XML
tags, and sequences of document numbers in document retrieval. When using
wavelet trees to represent grids, the sequence length $n$ becomes the width
of the grid and the alphabet size becomes the height of the grid, and both
are equal in most cases.

A large value of $\sigma$ affects the space usage of wavelet trees.
A pointerless wavelet tree 
\cite{MN07} concatenates all the bitmaps levelwise and removes the 
$O(\sigma \lg n)$ bits from the space. It retains the time complexity of
pointer-based wavelet trees, albeit it is slower in practice. This 
representation can be made to use $nH_0(S) + o(n\lg\sigma)$ bits, where
$H_0(S) \le \lg\sigma$ is the per-symbol zero-order entropy of $S$,
by using compressed bitmaps \cite{RRR07,GGV03}.
This makes the wavelet tree traversal even slower in practice, however.
 
A pointer-based wavelet tree, instead, can achieve zero-order compression by
replacing the balanced tree by the Huffman tree \cite{Huf52}. Then, even
without compressing the bitmaps, the storage space becomes 
$n(H_0(S)+1) + o(n(H_0(S)+1)) + O(\sigma\lg n)$ bits. Adding bitmap compression
removes the $n$ bits of the Huffman redundancy.
In addition, this technique is faster than the basic one, as the average access
time is $O(H_0(S))$. However, it still requires the $O(\sigma\lg n)$ extra
bits.

Other than wavelet trees, Golynski et al.~\cite{GMR06} proposed a sequence
representation for large alphabets, which uses $n\lg\sigma + o(n\lg\sigma)$ 
bits (no compression) and offers much faster time complexities to support the
three operations, $O(\lg\lg\sigma)$. Later, Barbay et al.~\cite{BGNN10} built on
this idea to obtain zero-order compression, $nH_0(S)+o(n(H_0(S)+1))$ bits,
while retaining the times. This so-called ``alphabet-partitioned'' 
representation does not, however, offer the 
richer functionality of wavelet trees. Moreover, as shown in their experiments
\cite{BCGNN13}, its sublinear space terms are higher in practice than those of
a zero-order compressed wavelet tree (yet their better complexity does show up
in practice). There are recent theoretical developments slightly improving those
complexities \cite{BN12}, but their sublinear space terms would be even higher 
in practice.

\paragraph{Our contribution.}

In this article we introduce the {\em
wavelet matrix}. This is an alternative representation of the balanced
pointerless wavelet tree that reorders the nodes in each level, in a way that 
retains all the wavelet tree functionality while the traversals needed to
carry out the operations are simplified and sped up. The wavelet matrix then
retains all the capabilities of wavelet trees, is resistant to large alphabets,
and its speed gets close to that of pointer-based wavelet trees. It can also
obtain zero-order compression by compressing the bitmaps (which slows it down).

We then consider how to give Huffman shape to the wavelet trees without the
burden of storing the tree pointers. This is achieved by combining canonical
Huffman codes \cite{SK64} with pointerless wavelet trees (now unbalanced).
Finally, we aim at combining both improvements, that is, obtaining 
Huffman shaped wavelet matrices. These yield simultaneously zero-order
compression and fast operations. It turns out, however, that the canonical 
Huffman
codes cannot be directly combined with the node numbering induced by the 
wavelet matrix, so we derive an alternative code assignment scheme that is
also optimal and compatible with the wavelet matrix.

We implement all the variants and test them over various real-life sequences, 
showing that a few versions of the wavelet matrix dominate all the wavelet tree
variants across the space/time tradeoff map, on diverse sequences over large 
alphabets and point grids.

\section{Basic Concepts}

\subsection{Wavelet Trees}
\label{sec:struc}

A wavelet tree \cite{GGV03} for sequence $S[1,n]$ over alphabet $[0..\sigma)$
is a complete balanced binary tree, where each node handles a range of
symbols. The root handles $[0..\sigma)$ and each leaf handles one symbol. Each
node $v$ handling the range $[\alpha_v,\omega_v)$ represents the subsequence
$S_v[1,n_v]$ of $S$ formed by the symbols in $[\alpha_v,\omega_v)$, but it
does not explicitly store $S_v$. Rather, internal nodes $v$ store a bitmap 
$B_v[1,n_v]$, so that $B_v[i]=0$ if $S_v[i] < \alpha_v+2^{\lceil
\lg(\omega_v-\alpha_v)\rceil-1}$ and $B_v[i]=1$ otherwise. That is, we
partition the alphabet interval $[\alpha_v,\omega_v)$ into two roughly equal
parts: a ``left'' one, $[\alpha_v, \alpha_v+2^{\lceil
\lg(\omega_v-\alpha_v)\rceil-1})$ and a ``right'' one, $[\alpha_v+2^{\lceil
\lg(\omega_v-\alpha_v)\rceil-1}, \omega_v)$. These are handled by the left and
right children of $v$. No bitmaps are stored for the leaves.
Figure~\ref{fig:wtree} (left) gives an example.

\begin{figure}[t]
\hspace*{-2mm}\includegraphics[height=4.4cm]{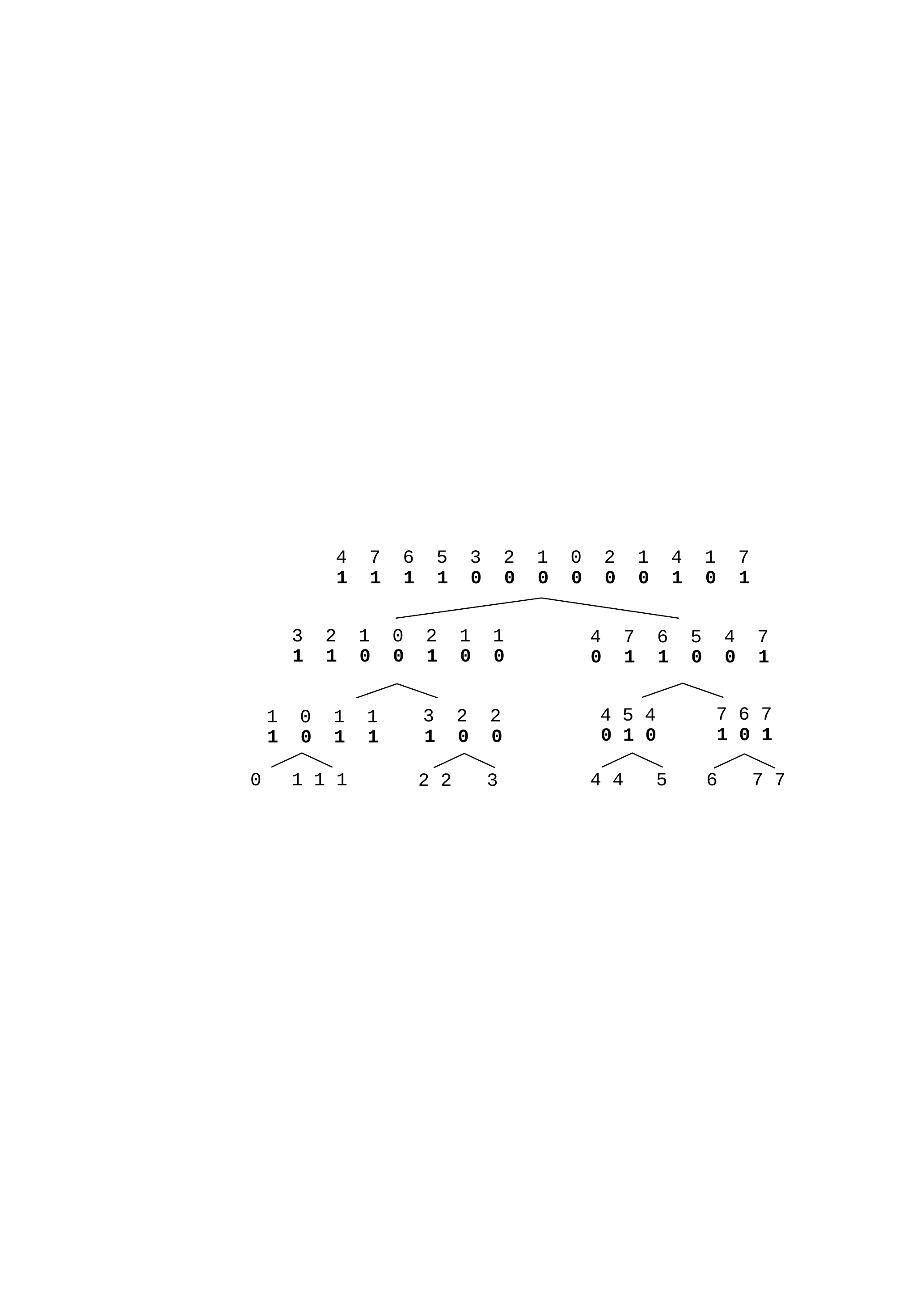}   \hfill

\vspace*{-4cm}
\hfill \includegraphics[height=3.6cm]{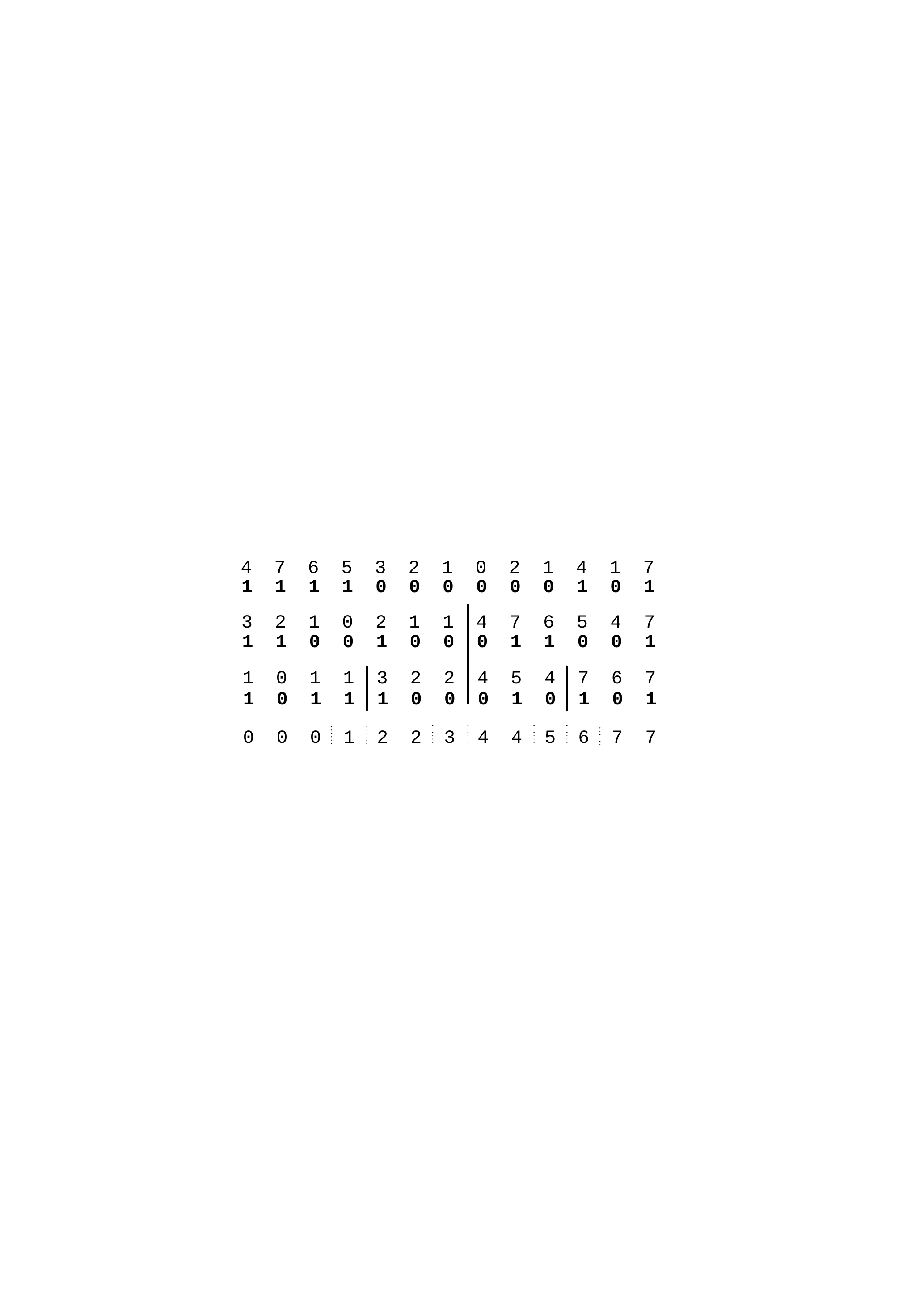} \hspace*{-4mm}

\caption{On the left, the standard wavelet tree over a sequence. The
subsequences $S_v$ are not stored. The bitmaps $B_v$,
in bold, are stored, as well as the tree topology. On the right, its pointerless
version. The divisions into nodes are not stored but computed on the fly.}
\label{fig:wtree}
\end{figure}

The tree has height $\lceil \lg \sigma \rceil$, and it has exactly $\sigma$
leaves and $\sigma-1$ internal nodes. If we regard it level by level, we can
see that it holds, in the $B_v$ bitmaps, exactly $n$ bits per level (the
lowest one may hold fewer bits). Thus it stores at most
$n\lceil\lg\sigma\rceil$ bits. Storing the tree pointers, and pointers to the
bitmaps, requires $O(\sigma \lg n)$ further bits, if we use the minimum of
$\lg n$ bits for the pointers.

To access $S[i]$, we start from the root node $\nu$, setting $i_\nu=i$. 
If $B_{\nu}[i_\nu]=0$, this means that $S[i] = S_\nu[i_\nu] < 
2^{\lceil \lg \sigma\rceil-1}$ and that the symbol is represented in the 
subsequence $S_{\nu_l}$ of the left child $\nu_l$ of the root. Otherwise, 
$S_\nu[i_\nu] \ge 2^{\lceil \lg \sigma\rceil-1}$ and it is represented in the 
subsequence $S_{\nu_r}$ of the right child $\nu_r$ of the root. In the first 
case, the position of $S_\nu[i_\nu]$ in $S_{\nu_l}$ is $i_{\nu_l} = 
\rank_0(B_\nu,i_\nu)$, whereas in the second, the position in $S_{\nu_r}$ is 
$i_{\nu_r} = \rank_1(B_\nu,i_\nu)$. We continue recursively, extracting 
$S_v[i_v]$ from node $v=\nu_l$ or $v=\nu_r$, until we arrive at a leaf 
representing the alphabet interval $[a,a]$, where we can finally report 
$S[i] = a$.

Therefore, the cost of operation $\access$ is that of $\lceil\lg\sigma\rceil$
binary $\rank$ operations on bitmaps $B_v$. Binary $\rank$ and $\select$
operations can be carried out in constant time using only $o(n_v)$ bits on top
of $B_v$ \cite{Jac89,Mun96,Cla96}.

The process to compute $\rank_a(S,i)$ is similar. The difference is that we do
not descend according to whether $B_v[i]$ equals $0$ or $1$, but rather
according to the bits of $a \in [0,\sigma)$: the highest bit of $a$ tells us 
whether to go left or right, and the lower bits are used in the next levels. 
When moving to a child $u$ of $v$, we compute $i_u=\rank_{0/1}(B_v,i_v)$ to be 
the number of times the current bit of $a$ appears in $B_v[1,i_v]$. When we 
arrive at the leaf $u$ handling the range $[a,a]$, the answer to $\rank_a(S,i)$
is $i_u$.

Finally, to compute $\select_a(S,j)$ we must proceed upwards. We start at the
leaf $u$ that handles the alphabet range $[a,a]$. So we want to track the
position of $S_u[j_u]$, $j_u=j$, towards the root. If $u$ is the left child of
its parent $v$, then the corresponding position at the parent is $S_v[j_v]$,
where $j_v = \select_0(B_v,j_u)$. Else, the corresponding position is
$j_v = \select_1(B_v,j_u)$. When we finally arrive at the root $\nu$,
the answer to the query is $j_\nu$.

Thus the cost of query $\rank_a(S,i)$ is $\lceil \lg\sigma \rceil$ binary
$\rank$ operations (just like $\access(S,i)$), and the cost of query
$\select_a(S,i)$ is $\lceil \lg\sigma \rceil$ binary $\select$ operations.
Algorithm~\ref{alg:standard} gives the pseudocode (the recursive form is
cleaner, but recursion can be easily removed).

\begin{algorithm}[t]
\caption{Standard wavelet tree algorithms: On the wavelet tree of sequence $S$
rooted at $\nu$,
$\mathbf{acc}(\nu,i)$ returns $S[i]$;
$\mathbf{rnk}(\nu,a,i)$ returns $\rank_a(S,i)$; and
$\mathbf{sel}(\nu,a,j)$ returns $\select_a(S,j)$.
The left/right children of $v$ are called $v_l/v_r$.}
\label{alg:standard}
\begin{tabular}{ccc}
\begin{minipage}{0.30\textwidth}

$\mathbf{acc}(v,i)$

\vspace{-0.1cm}
\begin{algorithmic}
\IF{$\omega_v-\alpha_v=1$}
   \RET $\alpha_v$
\ENDIF
\IF{$B_v[i]=0$}
   \STATE $i \leftarrow \rank_0(B_v,i)$
   \RET $\mathbf{acc}(v_l,i)$
\ELSE
   \STATE $i \leftarrow \rank_1(B_v,i)$
   \RET $\mathbf{acc}(v_r,i)$
\ENDIF
\end{algorithmic}
\end{minipage}
&
\begin{minipage}{0.33\textwidth}
$\mathbf{rnk}(v,a,i)$

\vspace{-0.1cm}
\begin{algorithmic}
\IF{$\omega_v-\alpha_v=1$}
   \RET $i$
\ENDIF
\IF{$a < 2^{\lceil \lg(\omega_v-\alpha_v)\rceil-1}$}
   \STATE $i \leftarrow \rank_0(B_v,i)$
   \RET $\mathbf{rnk}(v_l,a,i)$
\ELSE
   \STATE $i \leftarrow \rank_1(B_v,i)$
   \RET $\mathbf{rnk}(v_r,a,i)$
\ENDIF
\end{algorithmic}
\end{minipage}
&
\begin{minipage}{0.33\textwidth}
$\mathbf{sel}(v,a,j)$

\vspace{-0.1cm}
\begin{algorithmic}
\IF{$\omega_v-\alpha_v=1$}
   \RET $j$
\ENDIF
\IF{$a < 2^{\lceil \lg(\omega_v-\alpha_v)\rceil-1}$}
   \STATE $j \leftarrow \mathbf{sel}(v_l,a,j)$
   \RET $\select_0(B_v,j)$
\ELSE
   \STATE $j \leftarrow \mathbf{sel}(v_r,a,j)$
   \RET $\select_1(B_v,j)$
\ENDIF
\end{algorithmic}
\end{minipage}
\end{tabular}
\end{algorithm}

\subsection{Pointerless Wavelet Trees}

Since the wavelet tree is a complete balanced binary tree, it is possible to
concatenate all the bitmaps at each level and still retain the same
functionality \cite{MN07}. Instead of a bitmap per node $v$, there will
be a single bitmap per level $\ell$, $\B_\ell[1,n]$. Figure~\ref{fig:wtree}
(right) illustrates this arrangement. The main problem is how to keep
track of the range $\B_\ell[s_v,e_v]$ corresponding to a node $v$ of depth
$\ell$.

\paragraph{The strict variant.}

The strict variant \cite{MN07} stores no data apart from the
$\lceil\lg\sigma\rceil$
pointers to the level bitmaps. Keeping track of the node ranges
is not hard if we start at the root (as in $\access$ and $\rank$).
Initially, we know that $[s_\nu,e_\nu] = [1,n]$, that is, the whole bitmap
$\B_0$ is equal to the bitmap of the root, $B_\nu$. Now, imagine that we
have navigated towards a node $v$ at depth $\ell$, and know $[s_v,e_v]$.
The two children of $v$ share the same interval $[s_v,e_v]$ at $\B_{\ell+1}$.
The split point is $m=\rank_0(\B_\ell,e_v) - \rank_0(\B_\ell,s_v-1)$, the number
of 0s in $\B_\ell[s_v,e_v]$. Then, if we descend to the left child $v_l$, we
will have $[s_{v_l},e_{v_l}] = [s_v,s_v+m-1]$. If we descend to the right child
$v_r$, we will have $[s_{v_r},e_{v_r}] = [s_v+m,e_v]$.

Things are a little bit harder for $\select$, because we must proceed upwards.
In the strict variant, the way to carry out $\select_a(S,j)$ is to first
descend to the leaf corresponding to symbol $a$, and then track the leaf
position $j$ up to the root as we return from the recursion.

Algorithm~\ref{alg:levelwise} gives the pseudocode (we use $p=s-1$ instead of
$s=s_v$).
Note that, compared to the standard version, the strict variant requires two
extra binary $\rank$ operations per original binary $\rank$, on the top-down
traversals (i.e., for queries $\access$ and $\rank$). For query $\select$, the 
strict variant requires two extra binary $\rank$ operations per original binary
$\select$. Thus the times may up to triple for these traversals.\footnote{In 
practice the effect is not so large because of cache effects when $s_v$ is 
close to $e_v$. In addition, binary $\select$ is more expensive than $\rank$
in practice, thus the impact on query $\select$ is lower.}

\begin{algorithm}[t]
\caption{Pointerless wavelet tree algorithms (strict variant): On the wavelet
tree of sequence $S$,
$\mathbf{acc}(0,i,0,n)$ returns $S[i]$;
$\mathbf{rnk}(0,a,i,0,n)$ returns $\rank_a(S,i)$; and
$\mathbf{sel}(0,a,j,0,n)$ returns $\select_a(S,j)$.
For simplicity we have omitted the computation of $[\alpha_v,\omega_v)$.}
\label{alg:levelwise}
\begin{tabular}{ccc}
\begin{minipage}{0.27\textwidth}
$\mathbf{acc}(\ell,i,p,e)$
\begin{algorithmic}
\vspace*{-4mm}
\IF{$\omega_v-\alpha_v=1$}
   \RET $\alpha_v$
\ENDIF
\STATE $l \leftarrow \rank_0(\B_\ell,p)$
\STATE $r \leftarrow \rank_0(\B_\ell,e)$
\IF{$\B_\ell[i]=0$}
   \STATE $z \leftarrow \rank_0(\B_\ell,p+i)$
   \RET $\mathbf{acc}(\ell{+}1,~~~$ $~~~z{-}l,p,p{+}r{-}l)$
\ELSE
   \STATE $z \leftarrow \rank_1(\B_\ell,p+i)$
   \RET $\mathbf{acc}(\ell{+}1,~~~$ $~~~z{-}(p{-}l),p{+}r{-}l,e)$
\ENDIF
\end{algorithmic}
\end{minipage}
&
\begin{minipage}{0.29\textwidth}
$\mathbf{rnk}(\ell,a,i,p,e)$
\begin{algorithmic}
\vspace*{-4mm}
\IF{$\omega_v-\alpha_v=1$}
   \RET $i$
\ENDIF
\STATE $l \leftarrow \rank_0(\B_\ell,p)$
\STATE $r \leftarrow \rank_0(\B_\ell,e)$
\IF{$a < 2^{\lceil \lg(\omega_v-\alpha_v)\rceil-1}$}
   \STATE $z \leftarrow \rank_0(\B_\ell,p+i)$
   \RET $\mathbf{rnk}(\ell{+}1,a,~~~~$ $~~~z{-}l,p,p{+}r{-}l)$
\ELSE
   \STATE $z \leftarrow \rank_1(\B_\ell,p+i)$
   \RET $\mathbf{rnk}(\ell{+}1,a,~~~~$ $~~~z{-}(p{-}l),p{+}r{-}l,e)$
\ENDIF
\end{algorithmic}
\end{minipage}
&
\begin{minipage}{0.36\textwidth}
$\mathbf{sel}(\ell,a,j,p,e)$
\begin{algorithmic}
\IF{$\omega_v-\alpha_v=1$}
   \RET $j$
\ENDIF
\STATE $l \leftarrow \rank_0(\B_\ell,p)$
\STATE $r \leftarrow \rank_0(\B_\ell,e)$
\IF{$a < 2^{\lceil \lg(\omega_v-\alpha_v)\rceil-1}$}
   \STATE $j {\leftarrow} \mathbf{sel}(\ell{+}1,a,j,p,p{+}r{-}l)$
   \RET $\select_0(\B_\ell,l+j){-}p$
\ELSE
   \STATE $j {\leftarrow} \mathbf{sel}(\ell{+}1,a,j,p{+}r{-}l,e)$
   \RET $\select_1(\B_\ell,(p{-}l){+}j){-}p$
\ENDIF
\end{algorithmic}
\ \\
\end{minipage}
\end{tabular}
\end{algorithm}

\paragraph{The extended variant.}

The {\em extended} variant \cite{CN08}, instead, stores an array
$C[0,\sigma-1]$ of pointers to the $\sigma$ starting positions of the symbols
in the (virtual) array of the leaves, or said another way, $C[a]$ is the
number of occurrences of symbols smaller than $a$ in $S$. Note this array
requires $O(\sigma\lg n)$ bits (or at best $O(\sigma \lg(n/\sigma))+o(n)$ if
represented as a compressed bitmap \cite{RRR07}), but the constant is much
lower than on a pointer-based tree (which stores the left child, the right 
child, the parent, the value $n_v$, the pointer to bitmap $B_v$, pointers to
the leaves, etc.).

With the help of array $C$, the number of operations becomes closer to the
standard  version, since $C$ lets us compute the ranges: The range of
any node $v$ is simply $[s_v,e_v]=[C[\alpha_v]+1,C[\omega_v]]$. In the 
algorithms for queries $\access$ and $\rank$, where we descend from the root, 
the values $\alpha_v$ and $\omega_v$ are easily maintained. Thus we do not
need to compute $r$ in Algorithm~\ref{alg:levelwise}, as it is used only to
compute $e=e_v= C[\omega_v]$. Thus we require only one extra binary $\rank$
operation per level.

This is slightly more complicated when solving query $\select_a(S,j)$.
We start at offset $j$ in the interval $[C[\alpha_u]+1,C[\omega_u]]$ for
$(\alpha_u,\omega_u) = (a,a+1)$ and track this
position upwards: If the leaf $u$ is a left child of its parent $v$ (i.e.,
if $\alpha_u$ is even), then the parent's range (in the deepest bitmap 
$\B_\ell$) is $(\alpha_v,\omega_v) = (\alpha_u,\omega_u+1)$. Instead, if the 
leaf is a right child of its parent, then the parent's range is
$(\alpha_v,\omega_v) = (\alpha_u-1,\omega_u)$. We use binary $\select$ on the
range $[C[\alpha_v]+1,C[\omega_v]]$ to map the position $j$ to the parent's 
range. Now we proceed similarly at the parent $w$ of $v$. If
$\alpha_v = 0~\mathrm{mod}~4$, then $v$ is the left child of $w$, otherwise it 
is the right child. In the first case, the range of $w$ in bitmap
$\B_{\ell-1}$ is $(\alpha_w,\omega_w) = (\alpha_v,\omega_v+2)$, otherwise it is
$(\alpha_w,\omega_w) = (\alpha_v-2,\omega_v)$.
We continue until the root, where $j$ is the answer. In this case we need only
one extra binary $\rank$ operation per level.
Algorithm~\ref{alg:levelwiseC} details the algorithms.

\begin{algorithm}[t]
\caption{Pointerless wavelet tree algorithms (extended variant): On the wavelet
tree of sequence $S$,
$\mathbf{acc}(0,i)$ returns $S[i]$;
$\mathbf{rnk}(0,a,i)$ returns $\rank_a(S,i)$; and
$\mathbf{sel}(a,j)$ returns $\select_a(S,j)$.
For simplicity we have omitted the computation of $[\alpha_v,\omega_v)$,
except on $\mathbf{sel}(a,j)$, where for simplicity we assume $C[a]$ refers
to level $\ell = \lceil\lg\sigma\rceil$, where in fact it could refer to
level $\ell = \lceil\lg\sigma\rceil-1$.}
\label{alg:levelwiseC}
\begin{tabular}{ccc}
\begin{minipage}{0.32\textwidth}
$\mathbf{acc}(\ell,i)$
\begin{algorithmic}
\IF{$\omega_v-\alpha_v=1$}
   \RET $\alpha_v$
\ENDIF
\STATE $l \leftarrow \rank_0(\B_\ell,C[\alpha_v])$
\STATE $z \leftarrow \rank_0(\B_\ell,C[\alpha_v]{+}i)$
\IF{$\B_\ell[i]=0$}
   \RET $\mathbf{acc}(\ell{+}1,z{-}l)$
\ELSE
   \RET $\mathbf{acc}(\ell{+}1,i{-}(z{-}l))$
\ENDIF
\ \\
\ \\
\ \\
\ \\
\ \\
\end{algorithmic}
\end{minipage}
&
\begin{minipage}{0.34\textwidth}
$\mathbf{rnk}(\ell,a,i)$
\begin{algorithmic}
\IF{$\omega_v-\alpha_v=1$}
   \RET $i$
\ENDIF
\STATE $l \leftarrow \rank_0(\B_\ell,C[\alpha_v])$
\STATE $z \leftarrow \rank_0(\B_\ell,C[\alpha_v]+i)$
\IF{$a < 2^{\lceil \lg(\omega_v-\alpha_v)\rceil-1}$}
   \RET $\mathbf{rnk}(\ell{+}1,a,z{-}l)$
\ELSE
   \RET $\mathbf{rnk}(\ell{+}1,a,i{-}(z{-}l))$
\ENDIF
\ \\
\ \\
\ \\
\ \\
\ \\
\end{algorithmic}
\end{minipage}
&
\begin{minipage}{0.36\textwidth}
$\mathbf{sel}(a,j)$
\begin{algorithmic}
\STATE $\ell \leftarrow \lceil \lg \sigma \rceil$, 
       $d \leftarrow 1$
\WHILE{$\ell \ge 0$}
   \IF{$a ~\mathrm{mod}~ 2^d = 0$}
      \STATE $l \leftarrow \rank_0(\B_\ell,C[\alpha_v])$
      \STATE $j \leftarrow \select_0(\B_\ell,l{+}j)$
   \ELSE
      \STATE $\alpha_v \leftarrow \alpha_v-2^{d-1}$
      \STATE $l \leftarrow \rank_1(\B_\ell,C[\alpha_v])$
      \STATE $j \leftarrow \select_1(\B_\ell,l{+}j)$
   \ENDIF
   \STATE $j \leftarrow j - C[\alpha_v]$
   \STATE $\ell \leftarrow \ell-1$,
          $d \leftarrow d+1$
\ENDWHILE
\RET $j$
\end{algorithmic}
\end{minipage}
\end{tabular}
\end{algorithm}

\subsection{Huffman Shaped Wavelet Trees}

Given the frequencies of the $\sigma$ symbols in $S[1,n]$, the Huffman 
algorithm \cite{Huf52} produces an optimal variable-length encoding so that 
(1) it is prefix-free, that is, no code is a prefix of another; 
(2) the size of the compressed sequence is minimized. If symbol
$a \in [0,\sigma)$ appears $n_a$ times in $S$, then the Huffman algorithm
will assign it a codeword of length $\ell_a$ so that the sum
$L=\sum_a n_a \ell_a$ is minimized. Then the file is compressed to $L$ bits by
replacing each symbol $S[i]=a$ by its code of length $\ell_a$. 
The {\em empirical zero-order entropy} \cite{CT91}
of $S$ is $H_0(S) = \sum_a \frac{n_a}{n}\lg\frac{n}{n_a}\le \lg\sigma$, and 
no statistical compressor based on individual symbol probabilities can output 
less than $nH_0(S)$ bits. The output size of Huffman compression can be
bounded by $\sum_a n_a \ell_a < n(H_0(S)+1)$ bits, which is off the optimum
by less than 1 bit per symbol.

Huffman \cite{Huf52} showed how to build a so-called {\em Huffman tree} to obtain these codes.
The tree leaves will contain the symbols, whose codes are obtained by following
the tree path from the root to their leaves. Each branch of the tree is labeled
with 0 (say, the left child) or 1 (say, the right child), and the code 
associated with a symbol $a$ is obtained by concatenating the labels found in 
the path from the tree root to the leaf that contains symbol $a$.  

Building a balanced wavelet tree is equivalent to using a fixed-length encoding
of $\lceil \lg \sigma \rceil$ or $\lfloor \lg\sigma \rfloor$
bits per symbol. Instead, by giving the wavelet
tree the shape of the Huffman tree, the total number of bits stored is exactly 
the output size of the Huffman compressor \cite{GGV03,Nav12}: The leaf of $a$
is at depth $\ell_a$, and each of the $n_a$ occurrences induces one bit in the
bitmap of each of the $\ell_a$ ancestors of the leaf. The size 
of this tree, plus $\rank$/$\select$ overheads, is thus upper bounded by 
$n(H_0(S)+1)+o(n(H_0(S)+1))+O(\sigma \lg n)$ bits.
Figure~\ref{fig:huffman} depicts a Huffman shaped wavelet tree.

\begin{figure}[t]
  \hspace*{-2mm}\includegraphics[height=4.3cm]{wt.pdf}   \hfill

 \vspace*{-4.2cm}

 \hfill \includegraphics[height=4.3cm]{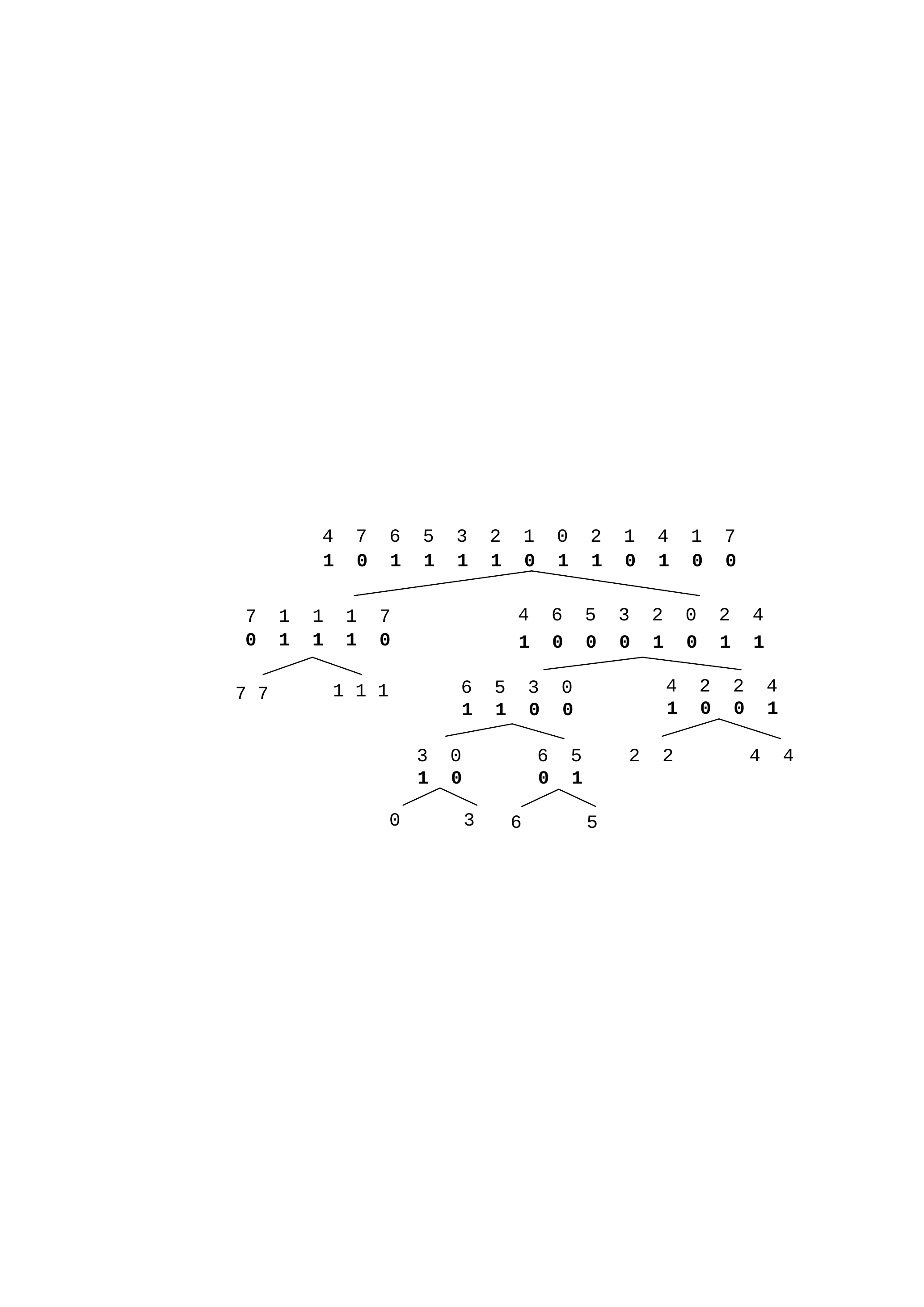} \hspace*{-2mm}

\caption{On the left, the same wavelet tree of Figure~\ref{fig:wtree}. On the
right, its Huffman shaped version.}
\label{fig:huffman}
\end{figure}

The wavelet tree operations are performed verbatim on Huffman shaped wavelet
trees. Moreover, they become faster on average: If $i \in [1,n]$ is chosen
at random for $\access(S,i)$, or $a$ is chosen with probability $n_a/n$
in operations $\rank_a(S,i)$ and $\select_a(S,j)$ (which is the typical case
in most applications), then the average time is $O(H_0(S)+1)$. By rebalancing 
deep leaves, the space and average time are maintained and the worst-case time 
of the operations is limited to $O(\lg\sigma)$ \cite{BN11}.

Zero-order compression can also be achieved on the balanced wavelet tree,
by using a compressed representation of the bitmaps \cite{RRR07}. The time
remains the same and the space decreases to $nH_0(S)+o(n\lg\sigma)$ bits
\cite{GGV03}. Combining the compressed bitmap representation with Huffman shape,
we obtain $nH_0(S) + o(n(H_0(S)+1)) + O(\sigma\lg n)$ bits. This combination
works well in practice \cite{CN08}, although the compressed bitmap 
representation is in practice slower than the plain one.

\subsection{Wavelet Trees on Point Grids}

As mentioned in the Introduction, wavelet trees are not only useful to 
support $\access$, $\rank$ and $\select$ operations on sequences. They are
also frequently used to represent point grids \cite{Cha88,Nav12}, where they
can for example count or list the points that lie in a rectangular area.
Typically the grid is square, of $n \times n$ cells, and contains $n$ points,
exactly one per point and per column (other arrangements are routinely mapped
to this simplified case). Then it can be regarded as a sequence $S[1,n]$ over
a large alphabet of size $\sigma=n$. In this case, pointer-based wavelet trees
perform poorly, as the space for the pointers is dominant. Similarly, 
zero-order compression is ineffective. The balanced wavelet 
trees without pointers \cite{MN07} are the most successful representation.

The pseudocode for range searches using standard wavelet trees is easily
available, see for example Gagie et al.~\cite{GNP11}. Algorithm~\ref{alg:range} 
shows the algorithms adapted to pointerless wavelet trees. We consider the two
basic operations $\rc(P,x_1,x_2,y_1,y_2)$ and $\rr(P,x_1,x_2,y_1,y_2)$, which
count and list, respectively, the points within the rectangle $[x_1,x_2]
\times [y_1,y_2]$ from the point set represented in sequence $P[1,n]$. The time
complexities can be shown to be $O(\lg n)$ for $\rc$ and $O(k\lg(n/k))$ for a
$\rr$ operation that lists $k$ points. In practical terms, compared to the
standard versions, the pointerless algorithms requires twice the number of
$\rank$ operations.

\begin{algorithm}[t]
\caption{Range search algorithms on pointerless wavelet trees: 
$\mathbf{count}(0,x_1,x_2,y_1,y_2,0,n)$
returns $\rc(P,x_1,x_2,y_1,y_2)$ on the wavelet tree of sequence $P$; and
$\mathbf{report}(0,x_1,x_2,y_1,y_2,0,n)$ outputs all those $y$, where a point
with coordinate $y_1 \le y \le y_2$ appears in $P[x_1,x_2]$. For
simplicity we have omitted the computation of $[\alpha_v,\omega_v)$.}
\label{alg:range}
\begin{tabular}{cc}
\begin{minipage}{0.5\textwidth}
$\mathbf{count}(\ell,x_1,x_2,y_1,y_2,p,e)$
\begin{algorithmic}
\IF{$x_1 > x_2 ~\lor~ [\alpha_v,\omega_v] \cap [y_1,y_2] = \emptyset$}
        \RET 0
\ELSIF{$[\alpha_v,\omega_v] \subseteq [y_1,y_2]$}
        \RET $x_2-x_1+1$
\ELSE
        \STATE $l \leftarrow \rank_0(\B_\ell,p)$
        \STATE $r \leftarrow \rank_0(\B_\ell,e)$
        \STATE $x_1^l \leftarrow \rank_0(\B_\ell,x_1-1)-l+1$
        \STATE $x_2^l \leftarrow \rank_0(\B_\ell,x_2)-l$
        \STATE $x_1^r \leftarrow x_1-x_1^l+1$, $x_2^r \leftarrow x_2-x_2^l$
        \RET $\mathbf{count}(\ell{+}1,x_1^l,x_2^l,y_1,y_2,p,p{+}r{-}l)$
        \STATE \hspace{0.97cm} $+ \mathbf{count}(\ell{+}1,x_1^r,x_2^r,y_1,y_2,
						p{+}r{-}l,e)$
\ENDIF
\end{algorithmic}
\end{minipage}
&
\begin{minipage}{0.5\textwidth}
$\mathbf{report}(v,x_1,x_2,y_1,y_2,p,e)$
\begin{algorithmic}
\IF{$x_1 > x_2 ~\lor~ [\alpha_v,\omega_v] \cap [y_1,y_2] = \emptyset$}
        \RET
\ELSIF{$\omega_v-\alpha_v=1$}
        \OUTPUT $\alpha_v$
\ELSE
        \STATE $l \leftarrow \rank_0(\B_\ell,p)$
        \STATE $r \leftarrow \rank_0(\B_\ell,e)$
        \STATE $x_1^l \leftarrow \rank_0(\B_\ell,x_1-1)-l+1$
        \STATE $x_2^l \leftarrow \rank_0(\B_\ell,x_2)-l$
        \STATE $x_1^r \leftarrow x_1-x_1^l+1$, $x_2^r \leftarrow x_2-x_2^l$
        \STATE $\mathbf{report}(\ell{+}1,x_1^l,x_2^l,y_1,y_2,p,p{+}r{-}l)$
        \STATE $\mathbf{report}(\ell{+}1,x_1^r,x_2^r,y_1,y_2,p{+}r{-}l,e)$
\ENDIF
\end{algorithmic}
\end{minipage}
\end{tabular}
\end{algorithm}

\section{Pointerless Huffman Shaped Wavelet Trees}
\label{sec:levelhuff}

In this section we show how to use canonical Huffman codes \cite{SK64} to
represent Huffman shaped wavelet trees without pointers, this way removing the
main component of the $O(\sigma\lg n)$ extra bits and retaining the
advantages of reduced space and $O(H_0(S)+1)$ average traversal time.

The problem that arises when storing a standard Huffman shaped wavelet tree
in levelwise form is that a leaf that appears in the middle of a level leaves
a ``hole'' that ruins the calculations done at the nodes to the right of it
to find their position in the next level. Canonical Huffman codes choose one
of the many optimal Huffman trees that, among other interesting benefits
\cite{SK64,Sal07}, yields a set of codes in which longer codes appear to the
left of shorter codes.\footnote{It is usually to the right, but this way is more
convenient for us.} As a consequence, all the leaves of a level appear grouped
to the right, and therefore do not alter the navigation calculations for the
other nodes. The levelwise deployment of the tree can be seen as a sequence of
``contiguous'' bitmaps of varying length. 

The navigation procedures of Algorithm~\ref{alg:levelwise} can then be used 
verbatim, except for a few alphabet mappings that must be carried out: For
$\access(S,i)$, we need to maintain the Huffman tree so that, given the 0/1 
labels of the traversed path, we determine the alphabet symbol corresponding 
to that leaf of the Huffman tree. For $\rank_a(S,i)$, we need to convert the
symbol $a$ to its variable-length code, in order to follow the corresponding
path in the wavelet tree. Finally, for $\select_a(S,i)$, we need the same as
for $\rank$ for the strict variant, or a pointer to the corresponding leaf area
in some bitmap $\B_\ell$, for the extended variant. The mappings are also used
to determine when to stop a top-down traversal. The mapping information amounts
to $O(\sigma\lg n)$ bits as well, but it is much less in practice than what is
stored for pointer-based wavelet trees, as explained. Moreover, in the case of
canonical codes, $\sigma\lg\sigma+O(\sigma)$ bits are sufficient to represent
the mappings. It has also been shown that they can be represented 
within $O(\sigma\lg\lg n)$ bits as well \cite{NO13}.

The maximum number of levels in a Huffman tree is $O(\lg n)$, and as explained
it can be made $O(\lg\sigma)$ without affecting the asymptotic performance.
Thus the pointers to the levels add up to a negligible $O(\lg^2 n)$ bits.
The rest of the space is as for standard Huffman shaped wavelet trees:
$n(H_0(S)+1)+o(n(H_0(S)+1))$ bits. Moreover, by using compressed bitmaps 
\cite{RRR07}, the space is reduced to $nH_0(S)+o(n(H_0(S)+1))$ bits,
albeit in practice the navigation is slowed down.

The algorithm to compute a canonical Huffman code \cite{SK64} starts from the
code length assignments $\ell_a$ produced by the standard Huffman algorithm,
and produces a particular Huffman tree with the same code lengths. First, it
computes $\ell_{min}$ and $\ell_{max}$, the minimum and maximum code lengths,
and array $nCodes[\ell_{min},\ell_{max}]$, where $nCodes[\ell]$ is the number 
of codes of length $\ell$. Then, the algorithm assigns the codes as follows:
\begin{enumerate}
\item $fst[\ell_{min}] = 0^{\ell_{min}}$ (i.e., $\ell_{min}$ 0s) is the first
code of length $\ell_{min}$.
\item All the codes of a given length $\ell$ are consecutive numbers, 
from $fst[\ell]$ to $last[\ell] = fst[\ell]+nCodes[\ell]-1$.
\item The first code of the next length $\ell'>\ell$ that has
$nCodes[\ell']>0$ is $fst[\ell']=2^{\ell'-\ell}(last[\ell]+1)$.
\end{enumerate}

Note that rule 2 ensures that all codes of a given level are consecutive 
numbers and the first of their length, whereas rule 3 guarantees that the set 
of produced codes is prefix-free. By interpreting the bit 0 as the right child 
and the bit 1 as the left child, we have that all the leaves at any level are 
the rightmost nodes. Figure~\ref{fig:canonical} illustrates the standard and 
the levelwise deployment of a canonical Huffman code.

\begin{figure}[t]
%
  \hspace*{-2mm}\includegraphics[height=4.6cm]{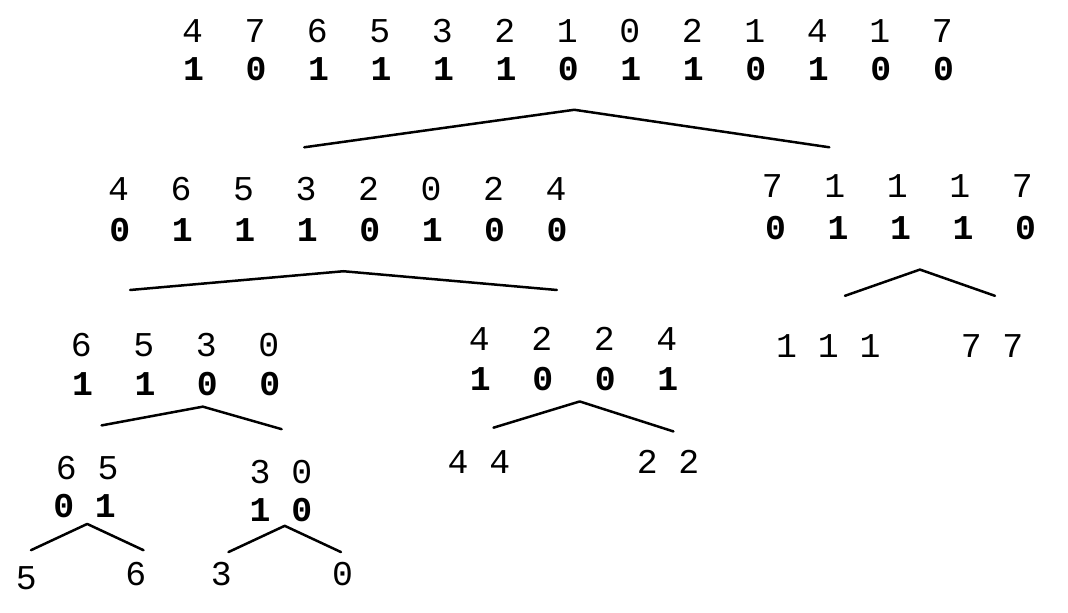}   

\vspace*{-4.6cm}

 \hfill \includegraphics[height=4.6cm]{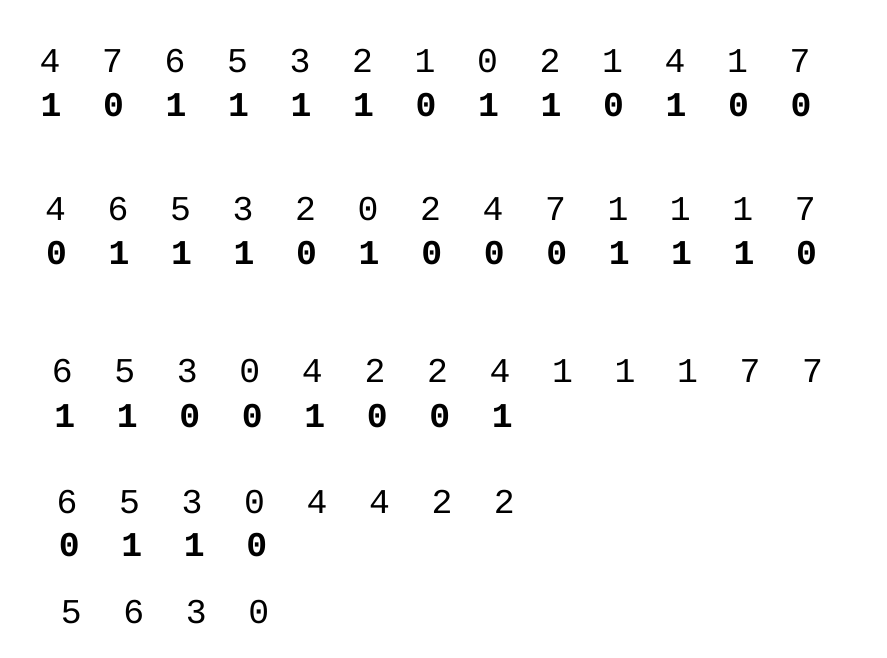} \hspace*{-2mm}


\caption{On the left, the pointer-based canonical Huffman code for our running
example. On the right, its levelwise representation. Note that from now on we 
interpret the bit 0 as going right and the bit 1 as going left.}
\label{fig:canonical}
\end{figure}

\no{
The implications of using canonical huffman codes in a wavelet tree are summarized in the following lemmas: 



\begin{lem}
\label{lemma.consec}
 Using a canonical huffman encoding, all the codes of a given length $\ell$ take up a consecutive segment of 
the wavelet tree bitmap $\B_\ell$.
\footnote{We use indistinctly the terms level $i$ in the data structure (the depth of the node) and $i$-bit of the 
code. Note there exists a biyection between both: the $i$-bit of a code is stored in a node at depth $i$.} 
\end{lem}

\begin{proof}

By definition of the wavelet tree, given a node, all the codes handled by its left child are smaller than those 
handled by the right one. Thus, given any node of the wavelet tree we know 
that all the elements that contains are greater than those represented by its left sibblings,  
and smaller than those of its right sibblings. Knowing that all the codes of a 
given length $\ell$ are consecutive numbers, by reading all the tree leaves at depth $\ell$ 
from left to right, we obtain a sequence of non-decreasing consecutive 
codes mapped also to a contiguous segment of the bitmap of $\B_\ell$.   
\end{proof}

\begin{lem}
\label{lemma.ini0}
 Using a canonical huffman encoding, 
each segment of consecutive codes that finish at a given level $\ell$ in a wavelet tree 
starts at the very first position in $\B_\ell$. 
\end{lem}

\begin{proof}
Suppose we are at level $\ell$ in the wavelet tree. 
The first code of a level $\ell+i$ with assigned codewords is obtained from the last code of 
length $\ell$ by computing 
$fst[\ell+i]=2^{\ell}(c_{\ell}+1)$ being $i>0$ and $fst[\ell+i]>c_{\ell}$ since $2^{\ell}(c_{\ell}+1)>c_{\ell}$.
Note that all codes of length $\ell+i$ are computed from $c_{\ell}+1$, that is, they are bigger than those of 
length $\ell$. 
By definition of the wavelet tree, all the codes that handle from a left 
branch of a tree node are smaller than those that handle from the right one. Therefore,
we know that smaller codes
are to the left of bigger ones in the bitmpas that represent them. Thus, 
if some codes remains at level $\ell$ (have length $\ell$), they have to be necessaryly in the leftmost 
part of the bitmap since those which are longer are also bigger and thus, they must be in the 
righmost part of the bitmap at level $\ell$. Therefore, we conclude that those codes of 
length $\ell$ have to  start necessaryly
at possition $0$ in $\B_\ell$. 
\end{proof}

\begin{lem}
\label{lemma.consec.ini0}
In a canonical huffman shaped wavelet tree, the least significant bit of those codes of length $\ell$ are 
located consecutively in the leftmost part of the bitmap at level $\ell$. 
\end{lem}
\begin{proof}
 It holds from Lemmas \ref{lemma.consec} and \ref{lemma.ini0}.
\end{proof}

As we have shown, by using a canonical huffman encoding, we obtain a wavelet tree in which all the codes that finish 
at a given level are grouped all together at the beggining of the bitmap. Thus, by knowing the number of codes 
that finish at each level, that is, the bitmap length of each level, carrying out the mapping from a larger to 
a shorter bitmap is straighforward. 

Therefore, we have shown that by using canonical huffman codes: 
\begin{enumerate}
 \item we can obtain a prefix-free minimum-redundancy set of codes;
 \item the set of generated codes follows Lemma \ref{lemma.consec.ini0} and therefore we can 
 guarantee the possibility of navigating the huffman shaped wavelet tree 
without using pointers. 
\end{enumerate}

Regarding the navigation algorithms presented for a pointerless wavalet tree (Algorithm \ref{alg:levelwise}), only
two changes must be introduced. Firstly, the mapping of possitions between bitmaps must be done taking 
into account the bitmaps have different lengths. Secondly, we must stop 
the searches when we have read a valid canonical huffman code (in $rank$ and $select$ operations) since now, the 
codes have different lengths and then, so does the path of from leaves to the root.
}

\section{The Wavelet Matrix}
\label{sec:wmatrix}

The idea of the wavelet matrix is to break the assumption that the children
of a node $v$, at interval $\B_\ell[s_v,e_v]$, must be aligned to it and
occupy the interval $\B_{\ell+1}[s_v,e_v]$. Freeing the structure from this
assumption allows us to design a much simpler mapping mechanism
from one level to the next: {\em all} the zeros of the level go left, and
{\em all} the ones go right. For each level, we will store a single integer
$z_\ell$ that tells the number of 0s in level $\ell$. This requires just
$O(\lg n \lg\sigma)$ bits, which is insignificant, and allows us to implement
the pointerless mechanisms in a simpler and faster way.

More precisely, if $\B_\ell[i]=0$, then the corresponding position at level
$\ell+1$ will be $\rank_0(\B_\ell,i)$. If $\B_\ell[i]=1$, the position at
level $\ell+1$ will be $z_\ell + \rank_1(\B_\ell,i)$. Note that we can map the
position without knowledge of the boundaries of the node the position belongs.
Still, every node $v$ at level $\ell$ occupies a contiguous range in
$\B_\ell$, as proved next. 

\begin{lem}
All the bits in any bitmap $\B_\ell'$ of the pointerless wavelet tree that 
correspond to a wavelet tree node $v$ are also contiguous in the bitmap
$\B_\ell$ of the the wavelet matrix.
\end{lem}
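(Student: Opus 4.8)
The natural approach is induction on the level $\ell$. The claim is that every wavelet tree node $v$ at depth $\ell$ occupies a contiguous range in the wavelet matrix bitmap $\B_\ell$. The base case is trivial: the root occupies all of $\B_0 = [1,n]$, which is contiguous. For the inductive step, I would assume that some node $v$ at depth $\ell$ occupies a contiguous range $\B_\ell[s_v, e_v]$, and show that each of its two children $v_l$ and $v_r$ occupies a contiguous range in $\B_{\ell+1}$.

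First I would recall the mapping rule defining the wavelet matrix: a position $i$ with $\B_\ell[i]=0$ maps to $\rank_0(\B_\ell,i)$ in $\B_{\ell+1}$, and a position $i$ with $\B_\ell[i]=1$ maps to $z_\ell + \rank_1(\B_\ell,i)$. The key observation is that both $\rank_0$ and $\rank_1$ are monotone (non-decreasing) functions of $i$, and strictly increasing when restricted to the positions carrying the relevant bit. So the set of positions in $[s_v,e_v]$ holding a $0$ — which are exactly the positions of $S_v$ that belong to the left child $v_l$ — map to a set of consecutive values $\rank_0(\B_\ell,s_v-1)+1, \ldots, \rank_0(\B_\ell,e_v)$ in $\B_{\ell+1}$; that is the range $[\,\rank_0(\B_\ell,s_v-1)+1,\ \rank_0(\B_\ell,e_v)\,]$, which is contiguous. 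Symmetrically, the positions in $[s_v,e_v]$ holding a $1$ map to the consecutive range $[\,z_\ell + \rank_1(\B_\ell,s_v-1)+1,\ z_\ell + \rank_1(\B_\ell,e_v)\,]$, also contiguous, and these are exactly the positions of $S_{v_r}$. This closes the induction: $v_l$ and $v_r$ each occupy a contiguous interval of $\B_{\ell+1}$.

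The step that needs the most care is verifying that \emph{all} the $0$s of $[s_v,e_v]$ (not merely those of $v$ as viewed in isolation) map contiguously — i.e., that the $0$-positions of different nodes at level $\ell$ do not interleave after the remapping. This follows because $\rank_0(\B_\ell,\cdot)$ is globally monotone: if node $v$ lies entirely to the left of node $v'$ in $\B_\ell$ (which holds by induction, since both are contiguous and the children of a common ancestor are laid out left-then-right at every level), then every $0$-position of $v$ receives a smaller $\rank_0$ value than every $0$-position of $v'$, so the image intervals are disjoint and ordered. Thus the global reshuffling "all zeros left, all ones right" respects, within the zeros-block and within the ones-block, the left-to-right order of the level-$\ell$ nodes, and contiguity is preserved. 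One should also note the harmless degenerate cases (a node all of whose symbols fall on one side, or a leaf reached before level $\ell'$), which do not affect the argument since an empty or singleton range is trivially contiguous.
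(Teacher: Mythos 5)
Your proposal is correct and follows essentially the same argument as the paper: induction from the root, observing that the strictly increasing map $\rank_0$ (resp.\ $z_\ell+\rank_1$) sends the $0$-positions (resp.\ $1$-positions) of a contiguous interval $\B_\ell[s_v,e_v]$ to a consecutive block in $\B_{\ell+1}$. Your version is somewhat more explicit than the paper's one-line proof (giving the image intervals and the non-interleaving remark), but the underlying idea is identical.
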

\begin{proof}
This is obviously true for the root $v=\nu$, as it corresponds to the whole
$\B_0' = \B_0$. Now, assuming it is true for a node $v$, with interval 
$\B_\ell[s_v,e_v]$, all the positions with $\B_\ell[i]=0$ for 
$s_v \le i \le e_v$ will be mapped to consecutive positions
$\B_{\ell+1}[\rank_0(\B_\ell,i)]$, and similarly with positions 
$\B_\ell[i]=1$.
\end{proof}

Figure~\ref{fig:wmatrix} illustrates the wavelet matrix, where it can
be seen that the blocks of the wavelet tree are maintained, albeit in
different order. We now describe how to carry out the operations under the
strict and the extended variants.

\begin{figure}[t]
  \includegraphics[height=3.8cm]{wtnp.pdf}   \hfill

\vspace*{-3.8cm}

 \hfill \includegraphics[height=3.65cm]{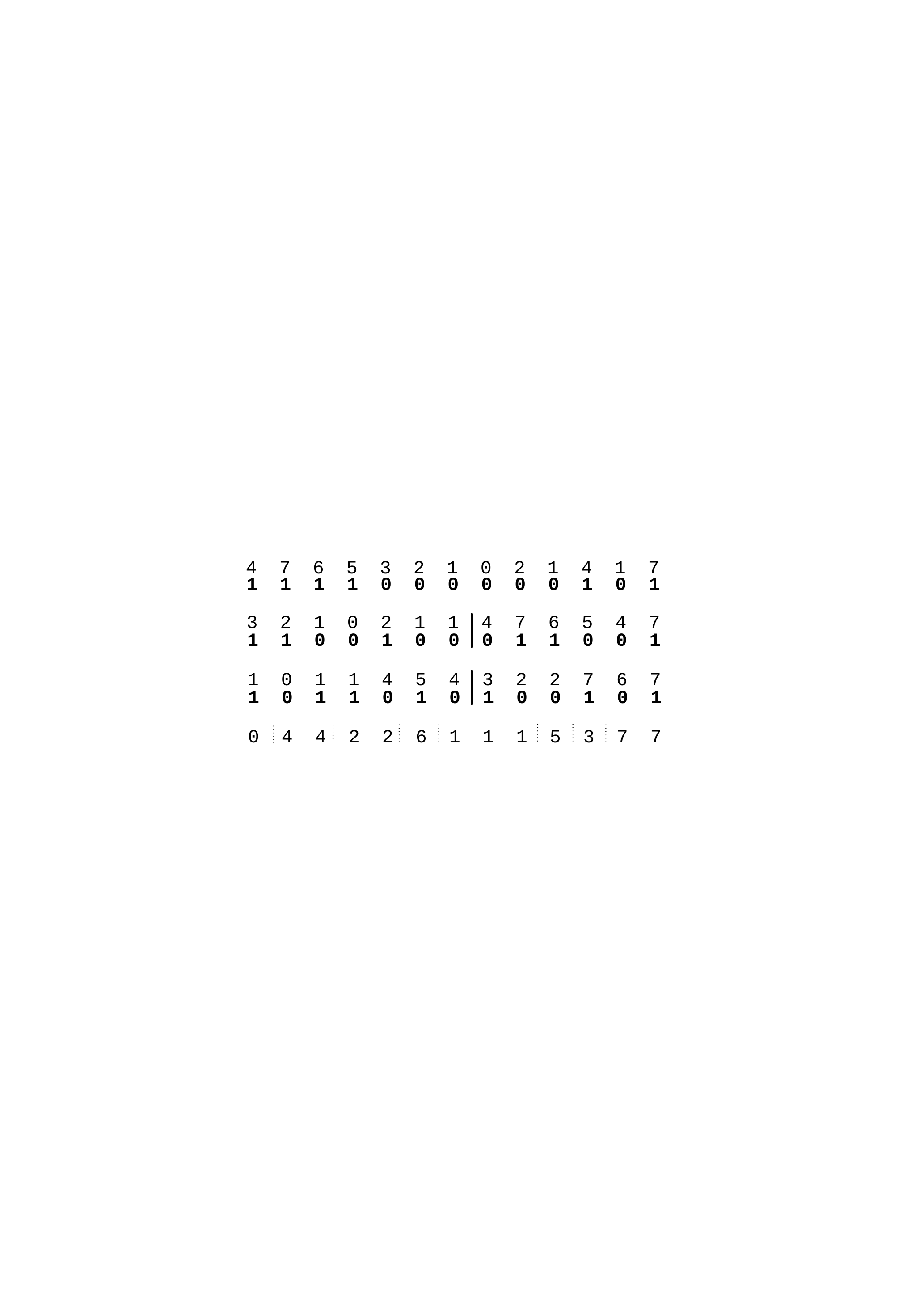}

\caption{On the left, the pointerless wavelet tree of Figure~\ref{fig:wtree}.
On the right, the wavelet matrix over the same sequence. One vertical 
line per level represents the position stored in the $z_\ell$ values.}
\label{fig:wmatrix}
\end{figure}

\paragraph{The strict variant.}
To carry out $\access(S,i)$, we first set $i_0$ to $i$. Then,
if $\B_0[i_0]=0$, we set $i_1$ to $\rank_0(\B_0,i_0)$. Else we set $i_1$ to 
$z_0 + \rank_1(\B_0,i_0)$.
Now we descend to level 1, and continue until reaching a leaf. The sequence
of bits $\B_\ell[i_\ell]$ read along the way form the value $S[i]$ (or, said 
another way, we maintain the interval $[\alpha_v,\omega_v)$ and upon reaching 
the leaf it holds $S[i]=\alpha_v$). Note that we have carried out only one 
binary $\rank$ operation per level, just as the standard wavelet tree.

Consider now the computation of $\rank_a(S,i)$. This time we need to keep
track of the position $i$, and also of the position preceding the range,
initially $p_0=0$. At each node $v$ of depth $\ell$, if
$a < 2^{\lceil \lg(\omega_v-\alpha_v)\rceil-1}$, then we go ``left'' by
mapping $p_{\ell+1}$ to $\rank_0(\B_\ell,p_\ell)$ and $i_{\ell+1}$ to 
$\rank_0(\B_\ell,i_\ell)$. Otherwise,
we go ``right'' by mapping $p_{\ell+1}$ to $z_\ell + \rank_1(\B_\ell,p_\ell)$ 
and $i_{\ell+1}$ to $z_\ell + \rank_1(\B_\ell,i_\ell)$. 
When we arrive at the leaf level, the
answer is $i_\ell-p_\ell$. Note that we have needed one extra binary $\rank$ 
operation
per original $\rank$ operation of the standard wavelet tree, instead of the
two extra operations required by the (strict) pointerless variant.

Finally, consider operation $\select_a(S,j)$. We first descend towards the
leaf of $a$ just as done for $\rank_a(S,i)$, keeping track only of $p_\ell$. 
When we arrive at the last level, $p_\ell$ precedes the range corresponding to 
the leaf of $a$, and thus we wish to track upwards position $j_\ell=p_\ell+j$. 
The upward tracking of a position $\B_{\ell}[j_\ell]$ is simple: If we went 
left from level $\ell-1$, then this position was mapped from a 0 in 
$\B_{\ell-1}$, and therefore it came from
$j_{\ell-1} = \B_{\ell-1}[\select_0(\B_\ell,j_\ell)]$. 
Otherwise, position $j_\ell$ was mapped from a 1, and thus it came from 
$j_{\ell-1} = \B_{\ell-1}[\select_1(\B_\ell,j_\ell-z_\ell)]$.
When we arrive at the root bitmap, $j_0$ is the answer. Note that we have
needed one extra binary $\rank$ per original binary $\select$ required by
the standard wavelet tree. We remind that in practice $\rank$ is much less
demanding, so this overhead is low.
Algorithm~\ref{alg:matrix} gives the pseudocode.

\begin{algorithm}[t]
\caption{Wavelet matrix algorithms (strict variant): 
On the wavelet matrix of sequence $S$,
$\mathbf{acc}(0,i)$ returns $S[i]$;
$\mathbf{rnk}(0,a,i,0)$ returns $\rank_a(S,i)$; and
$\mathbf{sel}(0,a,j,0)$ returns $\select_a(S,j)$.
For simplicity we have omitted the computation of $[\alpha_v,\omega_v)$.}
\label{alg:matrix}
\begin{tabular}{ccc}
\begin{minipage}{0.29\textwidth}
$\mathbf{acc}(\ell,i)$

\vspace{-0.1cm}
\begin{algorithmic}
\IF{$\omega_v-\alpha_v=1$}
   \RET $\alpha_v$
\ENDIF
\IF{$\B_\ell[i]=0$}
   \STATE $i \leftarrow \rank_0(\B_\ell,i)$
\ELSE
   \STATE $i \leftarrow \rank_1(\B_\ell,i)$
\ENDIF
\RET $\mathbf{acc}(\ell{+}1,i)$
\STATE
\STATE
\STATE
\end{algorithmic}
\end{minipage}
&
\begin{minipage}{0.33\textwidth}
$\mathbf{rnk}(\ell,a,i,p)$

\vspace{-0.1cm}
\begin{algorithmic}
\IF{$\omega_v-\alpha_v=1$}
   \RET $i-p$
\ENDIF
\IF{$a < 2^{\lceil \lg(\omega_v-\alpha_v)\rceil-1}$}
   \STATE $p \leftarrow \rank_0(\B_\ell,p)$
   \STATE $i \leftarrow \rank_0(\B_\ell,i)$
\ELSE
   \STATE $p \leftarrow z_\ell+\rank_1(\B_\ell,p)$
   \STATE $i \leftarrow z_\ell+\rank_1(\B_\ell,i)$
\ENDIF
\RET $\mathbf{rnk}(\ell{+}1,a,i,p)$
\STATE
\end{algorithmic}
\end{minipage}
&
\begin{minipage}{0.36\textwidth}
$\mathbf{sel}(\ell,a,j,p)$

\vspace{-0.1cm}
\begin{algorithmic}
\IF{$\omega_v-\alpha_v=1$}
   \RET $p+j$
\ENDIF
\IF{$a < 2^{\lceil \lg(\omega_v-\alpha_v)\rceil-1}$}
   \STATE $p \leftarrow \rank_0(\B_\ell,p)$
   \STATE $j \leftarrow \mathbf{sel}(\ell{+}1,a,j,p)$
   \RET $\select_0(\B_\ell,j)$
\ELSE
   \STATE $p \leftarrow z_\ell + \rank_1(\B_\ell,p)$
   \STATE $j \leftarrow \mathbf{sel}(\ell{+}1,a,j,p)$
   \RET $\select_1(\B_\ell,j{-}z_\ell)$
\ENDIF
\end{algorithmic}
\end{minipage}
\end{tabular}
\end{algorithm}

\paragraph{The extended variant.}
We can speed up $\rank$ and $\select$ operations if the array $C$ that points
to the starting positions of each symbol in the last level bitmap is
available. First, we note that for $\rank_a(S,i)$ we do not need anymore to
keep track of $p_\ell$, since all we need at the end is to return $i_\ell-
C[a]$. Thus the cost becomes similar to that of the standard wavelet tree,
which was not achieved with the extended variant of the pointerless wavelet
tree.

For $\select_a(S,j)$ we can avoid the first downward traversal, as in the
pointerless wavelet tree, and use the same technique to determine whether
we came from the left or from the right in the parent bitmap. Once again,
the cost becomes the same as in a standard wavelet tree, with no extra
$\rank$ operations required.
Algorithm~\ref{alg:matrixC} gives the detailed algorithm.

\begin{algorithm}[t]
\caption{Wavelet matrix algorithms (extended variant): 
On the wavelet matrix of sequence $S$,
$\mathbf{acc}(0,i)$ returns $S[i]$;
$\mathbf{rnk}(0,a,i)$ returns $\rank_a(S,i)$; and
$\mathbf{sel}(a,j)$ returns $\select_a(S,j)$.
For simplicity we have omitted the computation of $[\alpha_v,\omega_v)$
For simplicity we have omitted the computation of $[\alpha_v,\omega_v)$,
and in $\mathbf{sel}(a,j)$ we assume $C[a]$ refers
to level $\ell = \lceil\lg\sigma\rceil$, where in fact it could refer to
level $\ell = \lceil\lg\sigma\rceil-1$.}
\label{alg:matrixC}
\begin{tabular}{ccc}
\begin{minipage}{0.29\textwidth}
$\mathbf{acc}(\ell,i)$

\begin{algorithmic}
\IF{$\omega_v-\alpha_v=1$}
   \RET $\alpha_v$
\ENDIF
\IF{$\B_\ell[i]=0$}
   \STATE $i \leftarrow \rank_0(\B_\ell,i)$
\ELSE
   \STATE $i \leftarrow \rank_1(\B_\ell,i)$
\ENDIF
\RET $\mathbf{acc}(\ell{+}1,i)$
\end{algorithmic}
\ \\
\end{minipage}
&
\begin{minipage}{0.33\textwidth}
$\mathbf{rnk}(\ell,a,i)$

\begin{algorithmic}
\IF{$\omega_v-\alpha_v=1$}
   \RET $i-C[a]$
\ENDIF
\IF{$a < 2^{\lceil \lg(\omega_v-\alpha_v)\rceil-1}$}
   \STATE $i \leftarrow \rank_0(\B_\ell,i)$
\ELSE
   \STATE $i \leftarrow z_\ell+\rank_1(\B_\ell,i)$
\ENDIF
\RET $\mathbf{rnk}(\ell{+}1,a,i)$
\end{algorithmic}
\ \\
\end{minipage}
&
\begin{minipage}{0.36\textwidth}
$\mathbf{sel}(a,j)$
\begin{algorithmic}
\STATE $\ell \leftarrow \lceil \lg \sigma \rceil$,
       $d \leftarrow 1$
\STATE $j \leftarrow C[a]+j$
\WHILE{$\ell \ge 0$}
   \IF{$a ~\mathrm{mod}~ 2^d = 0$}
      \STATE $j \leftarrow \select_0(\B_\ell,j)$
   \ELSE
      \STATE $j \leftarrow \select_1(\B_\ell,j{-}z_\ell)$
   \ENDIF
   \STATE $\ell \leftarrow \ell-1$,
          $d \leftarrow d+1$
\ENDWHILE
\RET $j$
\end{algorithmic}

\end{minipage}
\end{tabular}
\end{algorithm}

\paragraph{Range searches.}
Range searches for rectangles $[x_1,x_2] \times [y_1,y_2]$ require essentially 
that we are able to track the points $x_1$ and $x_2$ downwards in the tree.
Thus the same wavelet matrix mechanism for $\rank$ can be used. Since we are 
only interested
in the value $x_2-x_1$ at the traversed nodes, we do not need to keep track
of $p$, even in the strict variant (the extended variant requires too much
space in this scenario). As a result, we need the same number of $\rank$
operations as in a pointer-based representation, and get rid of the two
extra $\rank$ operations required by the pointerless wavelet tree.
Algorithm~\ref{alg:rangematrix} gives the pseudocode.

\begin{algorithm}[t]
\caption{Range search algorithms on the wavelet matrix: 
$\mathbf{count}(0,x_1,x_2,y_1,y_2)$
returns $\rc(P,x_1,x_2,y_1,y_2)$ on the wavelet tree of sequence $P$; and
$\mathbf{report}(0,x_1,x_2,y_1,y_2)$ outputs all those $y$, where a point
with coordinate $y_1 \le y \le y_2$ appears in $P[x_1,x_2]$. For
simplicity we have omitted the computation of $[\alpha_v,\omega_v)$.}
\label{alg:rangematrix}
\begin{tabular}{cc}
\begin{minipage}{0.5\textwidth}
$\mathbf{count}(\ell,x_1,x_2,y_1,y_2)$
\begin{algorithmic}
\IF{$x_1 > x_2 ~\lor~ [\alpha_v,\omega_v] \cap [y_1,y_2] = \emptyset$}
        \RET 0
\ELSIF{$[\alpha_v,\omega_v] \subseteq [y_1,y_2]$}
        \RET $x_2-x_1+1$
\ELSE
        \STATE $x_1^l \leftarrow \rank_0(\B_\ell,x_1-1)+1$
        \STATE $x_2^l \leftarrow \rank_0(\B_\ell,x_2)$
        \STATE $x_1^r \leftarrow x_1-x_1^l+1$, $x_2^r \leftarrow x_2-x_2^l$
        \RET $\mathbf{count}(\ell{+}1,x_1^l,x_2^l,y_1,y_2)$
        \STATE \hspace{0.97cm} $+ \mathbf{count}(\ell{+}1,x_1^r,x_2^r,y_1,y_2)$
\ENDIF
\end{algorithmic}
\end{minipage}
&
\begin{minipage}{0.5\textwidth}
$\mathbf{report}(v,x_1,x_2,y_1,y_2)$
\begin{algorithmic}
\IF{$x_1 > x_2 ~\lor~ [\alpha_v,\omega_v] \cap [y_1,y_2] = \emptyset$}
        \RET
\ELSIF{$\omega_v-\alpha_v=1$}
        \OUTPUT $\alpha_v$
\ELSE
        \STATE $x_1^l \leftarrow \rank_0(\B_\ell,x_1-1)+1$
        \STATE $x_2^l \leftarrow \rank_0(\B_\ell,x_2)$
        \STATE $x_1^r \leftarrow x_1-x_1^l+1$, $x_2^r \leftarrow x_2-x_2^l$
        \STATE $\mathbf{report}(\ell{+}1,x_1^l,x_2^l,y_1,y_2)$
        \STATE $\mathbf{report}(\ell{+}1,x_1^r,x_2^r,y_1,y_2)$
\ENDIF
\end{algorithmic}
\end{minipage}
\end{tabular}
\end{algorithm}

\paragraph{Construction.}
Construction of the wavelet matrix is even simpler than that of the pointerless
wavelet tree, because we do not need to care about node boundaries. At the
first level we keep in bitmap $\B_0$ the highest bits of the symbols in $S$,
and then stably sort $S$ by those highest bits. Now we keep in bitmap $\B_1$
the next-to-highest bits, and stably sort $S$ by those next-to-highest bits.
We continue until considering the lowest bit. This takes $O(n\lg\sigma)$ time.

Indeed, we can build the wavelet matrix almost in place,
by removing the highest bits after using them and packing the symbols
of $S$. This frees $n$ bits, where we can store the bitmap $\B_0$ we have just
generated, and keep doing the same for the next levels. We generate the
$o(n\lg\sigma)$-space indexes at the end. Thus the construction space is
$n\lceil\lg\sigma\rceil + \max(n,o(n\lg\sigma))$ bits. Other more
sophisticated techniques \cite{CNS11,Tis11} may use even less space.

\section{The Compressed Wavelet Matrix}
\label{sec:comprwmatrix}

Just as on the pointerless wavelet tree, we can achieve zero-order entropy 
with the wavelet matrix by replacing the plain representations 
of bitmaps $\B_\ell$ by compressed ones \cite{RRR07}, the space becoming
$nH_0(S) + o(n\lg\sigma)$ bits. 
Compared to obtaining zero-order entropy using Huffman shaped trees, this
solution has several disadvantages, as explained: (1) the compressed bitmaps 
are slower to operate than in a plain representation; (2) the number of 
operations on a Huffman shaped tree is lower on average than on a balanced 
tree; (3) the Huffman shaped wavelet tree is more compact, as it reduces the 
redundancy from $o(n\lg\sigma)$ to $o(n(H_0(S)+1))$ (albeit a small 
$O(\sigma\lg n)$-bit space term is added to hold the Huffman model); 
(4) the bitmap compression
can be additionally combined with the Huffman shape, obtaining further 
compression (yet higher time).

The idea is the same as in Section~\ref{sec:levelhuff}: Arrange the codes so
that all the leaves are grouped to the right of the bitmaps $\B_\ell$.
However, because of the reordering of nodes produced by the wavelet matrix,
the use of canonical Huffman codes does not guarantee that the leaves of the
same level are contiguous. In the wavelet matrix, the position of a code $c$ in $\B_{\ell+1}$ depends only on the position of $c$ in $\B_{\ell}$ and on the 
bit of $c$ in that level, $c[\ell]$. Figure~\ref{fig:example} illustrates an
example of a canonical set of codes where the first 16 shortest codewords take 
values from $00000$ to $01111$ and the remaining 32 from $100000$ to $111110$. 
The figure shows the relative positions of the codes at successive levels of 
the wavelet matrix for a sequence $\dots c_8,c_{12},c_{32},c_{48}\dots$, where 
$c_8=01000$, $c_{12}=01100$, $c_{32}=100000$, and $c_{48}=110000$. 
As we can see, codes $c_8$ and $c_{12}$ finish at level $5$ but they are not 
contiguous since there is a $c_{48}$ between them. 

\begin{figure}[t]

\begin{eqnarray*}
\ell=1 & \dots,c_8,c_{12},c_{32},c_{48}\dots \\
\ell=2 & \dots,c_8,c_{12},\dots|\dots,c_{32},c_{48}\dots \\
\ell=3 & \dots,c_{32}\dots|\dots,c_8,c_{12},\dots,c_{48}\dots \\
\ell=4 & \dots,c_{32},\dots,c_8,\dots,c_{48}|\dots,c_{12},\dots \\
\ell=5 & \dots,c_{16},\dots,c_8,\dots,c_{48},\dots,c_{12}\dots | \dots \\
\ell=6 & \dots,c_{32},c_{48},\dots|\dots 
\end{eqnarray*}

\caption{Example of a sequence of canonical codes along wavelet matrix levels,
showing that the leaves do not span a contiguous area. The the vertical bar 
$``|"$ marks the points $z_{\ell}$.}
\label{fig:example}
\end{figure}

We require a distinct mechanism to design an optimal prefix-free code that 
guarantees that, under the shuffling rules of the wavelet matrix, all the 
leaves at any level form a contiguous area to the right of the bitmap.

We start by studying how the wavelet matrix sorts the codes at each level.
Consider a pair of codes $c_1[1,\ell_1]$ and $c_2[1,\ell_2]$. Depending on 
their bits at a given level $\ell$ of the wavelet matrix, two cases are 
possible: $(a)$ $c_1[\ell]=c_2[\ell]$ and then the relative positions of $c_1$ 
and $c_2$ stay the same at level $\ell+1$, or $(b)$ $c_1[\ell]\neq c_2[\ell]$ 
and then their relative positions in level $\ell+1$ depend on the relation
between $c_1[\ell]$ and $c_2[\ell]$.
This yields the following lemma: 

\begin{lem}
\label{lemma:order.prefix}
 In a wavelet matrix, given any pair of codes $c_1$ and $c_2$, $c_1$ appears
before(after) $c_2$ in $\B_\ell$ if, for some $0 \le i < \ell$, it holds
$c_1[\ell-i,\ell-1]=c_2[\ell-i,\ell-1]$ and $c_1[\ell-i-1]=0(1)\neq c_2[\ell-i-1]$.
\end{lem}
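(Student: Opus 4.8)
The plan is to reduce the statement to two elementary facts about how one wavelet-matrix level permutes the codes into the next, both immediate from the mapping rule recalled above (the position of a code $c$ in $\B_{\ell+1}$ is $\rank_0(\B_\ell,i)$ if $\B_\ell[i]=0$ and $z_\ell+\rank_1(\B_\ell,i)$ if $\B_\ell[i]=1$, where $i$ is its position in $\B_\ell$). First, a \emph{block property}: every code $c$ with $c[\ell]=0$ is sent into the prefix $[1,z_\ell]$ of $\B_{\ell+1}$, and every code with $c[\ell]=1$ into the suffix $(z_\ell,n]$. Second, \emph{stability}: since $\rank_0(\B_\ell,\cdot)$ and $\rank_1(\B_\ell,\cdot)$ are nondecreasing, two codes with the same value of $c[\ell]$ keep their relative order in $\B_{\ell+1}$. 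Both are essentially the observations already used in the construction paragraph and in the proof of the preceding lemma, so I would just restate them with a one-line justification. I would also make explicit the standing assumption, implicit in the statement, that $c_1$ and $c_2$ both occur in $\B_\ell$ (their code lengths exceed $\ell$), and note that when the hypothesis fails for every $i$ — i.e.\ $c_1$ and $c_2$ agree on all of $c[0],\dots,c[\ell-1]$ — the lemma asserts nothing.

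Given the hypothesis I would put $j=\ell-i-1$, so $c_1[j]\neq c_2[j]$ while $c_1[k]=c_2[k]$ for all $k$ with $j<k\le\ell-1$; note this makes $j$ the \emph{last} level before $\ell$ at which the two codes differ, and it is uniquely determined by the hypothesis. Then I would track $c_1$ and $c_2$ through the passes $\B_j\to\B_{j+1}\to\cdots\to\B_\ell$. The pass $\B_j\to\B_{j+1}$ splits by the bit at level $j$, and since $c_1[j]\neq c_2[j]$ the block property sends the two codes into different blocks of $\B_{j+1}$: if $c_1[j]=0$ then $c_1$ lands in $[1,z_j]$ and $c_2$ in $(z_j,n]$, so $c_1$ precedes $c_2$ there — symmetrically it follows $c_2$ when $c_1[j]=1$ — and the internal order of either block is irrelevant since the codes sit in different blocks. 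Each remaining pass $\B_m\to\B_{m+1}$, $j<m\le\ell-1$, splits by a bit on which $c_1$ and $c_2$ agree, so stability preserves their relative order all the way to $\B_\ell$. Hence $c_1$ precedes (resp.\ follows) $c_2$ in $\B_\ell$ exactly when $c_1[j]=0$ (resp.\ $=1$), which is the claim.

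Formally the ``tracking'' is a trivial induction on $i+1$, the number of passes from level $j$ to level $\ell$: the base case $i=0$ is the block property at level $\ell-1$, and the step drops the top (shared) bit; this keeps $j$ as the last differing level among the first $\ell-1$ levels and lowers $i$ by one, so stability together with the inductive hypothesis finishes it. As a remark I might record the equivalent global statement that the codes appear in $\B_\ell$ in lexicographic order of their reversed prefixes $(c[\ell-1],c[\ell-2],\dots,c[0])$, with ties broken by position in $S$. I do not expect any real obstacle: the argument has essentially no computation, and the only points needing attention are keeping the level/bit offset consistent (the pass $\B_\ell\to\B_{\ell+1}$ uses bit $c[\ell]$, so $\B_\ell$ is $S$'s codes stably sorted by $c[0],\dots,c[\ell-1]$ in that order) and the boundary and vacuity remarks mentioned above.
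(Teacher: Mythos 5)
Your proof is correct and follows essentially the same route as the paper's: you establish the block property (a differing bit at a level sends the two codes into the $0$-block and the $1$-block, with the $0$-block first) and stability (agreeing bits preserve relative order, since $\rank_0$ and $\rank_1$ are nondecreasing), then apply stability transitively over the levels $\ell-i,\dots,\ell-1$ where the codes agree. The paper's proof is just a terser version of this same argument; your added remarks on vacuity and on both codes surviving to level $\ell$ are harmless elaborations.
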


\begin{proof}
If $c_1[\ell-i,\ell-1]=c_2[\ell-i,\ell-1]$, then $c_1$ and $c_2$ transitively 
keep their relative positions from level $\ell-i$ to level $\ell$. Instead,
$c_1[\ell-i-1]\neq c_2[\ell-i-1]$ makes their ordering in level $\ell-i$
dependent only on how $c_1[\ell-i-1]$ and $c_2[\ell-i-1]$ compare to each
other.
\end{proof}

As a second step, assume we want to design a set of fixed-length codes 
$\{ c_a,\, a \in [0,\sigma) \}$ such that $c_a<c_b$ iff the area of $c_a$ is 
before that of $c_b$ in $\B_{\lceil \lg \sigma \rceil}$. That is, we want the
codes to be listed in order in the last level.
Let $inv:\{0,1\}^{\mathbb{N}^+} \times \mathbb{N}^+ \rightarrow 
\{0,1\}^{\mathbb{N}^+}$ be defined as
$inv(c[1,\ell],\ell) = c^{-1}[1,\ell]$, where $c^{-1}[i]=c[\ell-i+1]$ for 
all $1 \le i \le \ell$. That is, $inv(c,\ell)$ takes number $c$ as a codeword
of $\ell$ bits and returns the code obtained by reading $c$ backwards.
Then, the following lemma holds: 

\begin{lem}
\label{lemma:inverted.codes}
Given any two values $i,j \in [0,\sigma)$ where $i<j$, code 
$inv(i,\lceil \lg \sigma \rceil)$ is located to the left of code 
$inv(j,\lceil \lg \sigma \rceil)$ in the bitmap 
$\B_{\lceil \lg \sigma \rceil}$ of a wavelet matrix that 
uses such codes.
\end{lem}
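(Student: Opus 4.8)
~The plan is to derive Lemma~\ref{lemma:inverted.codes} directly from Lemma~\ref{lemma:order.prefix} by analysing what it means for two inverted fixed-length codes to differ. Fix $i<j$ in $[0,\sigma)$ and write $L=\lceil\lg\sigma\rceil$. Let $a=inv(i,L)$ and $b=inv(j,L)$, so that $a[k]=i[L-k+1]$ and $b[k]=j[L-k+1]$ for $1\le k\le L$, where here $i[\cdot]$ and $j[\cdot]$ denote the bits of $i$ and $j$ written as $L$-bit numbers with $i[1]$ the most significant bit. Since $i<j$, there is a highest bit position in which they differ; call it position $t$ (counting from the most significant end), so $i[1,t-1]=j[1,t-1]$, $i[t]=0$ and $j[t]=1$. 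Translating through the definition of $inv$, this says $a[L-t+2,L]=b[L-t+2,L]$ (the shared high bits of $i,j$ become the shared low bits of $a,b$) and $a[L-t+1]=0\neq b[L-t+1]=1$.

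Next I would match this against the hypothesis of Lemma~\ref{lemma:order.prefix} with $\ell=L$. Taking $i'=t-1$ (using a different name to avoid clashing with the $i$ above), we have $0\le i' < L$, and the two conditions become $a[L-i',L-1]=b[L-i',L-1]$ and $a[L-i'-1]=0\neq b[L-i'-1]$, which is exactly what the previous paragraph established. Hence Lemma~\ref{lemma:order.prefix} applies and tells us that $a$ appears before $b$ in $\B_L$, i.e. $inv(i,L)$ is to the left of $inv(j,L)$, as claimed.

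The one point that needs care — and the only place the argument can go wrong — is the index bookkeeping in the passage between ``bits of $i$ as a number'' and ``bits of the codeword $a=c^{-1}$'', because the paper's $inv$ reverses the string and Lemma~\ref{lemma:order.prefix} is phrased in terms of suffixes $c[\ell-i,\ell-1]$ and the bit $c[\ell-i-1]$ just before them. Concretely I must check that ``most significant differing bit of $i$ vs.\ $j$'' maps to ``first differing bit, scanning from the low end, of $a$ vs.\ $b$'', and that the $0$-before, equal-after pattern survives the reversal with the orientation ($0$ meaning ``goes left'') matching the ``appears before'' conclusion of Lemma~\ref{lemma:order.prefix}. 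This is a short but genuine verification; once the indices are lined up, the result is immediate. I would also remark that the fixed-length assumption is what makes all the reversed codes have the same length $L$, so that a single bitmap $\B_L$ is the right place to compare them; the variable-length case is exactly what the subsequent development of the section has to handle separately.
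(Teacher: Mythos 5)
Your proof is correct and follows essentially the same route as the paper's: both reduce the claim to Lemma~\ref{lemma:order.prefix} by observing that the most significant differing bit of $i$ and $j$ becomes, after reversal, the last differing bit of the two fixed-length codes, which by that lemma (with $0$ meaning ``goes left'') fixes their relative order in the final bitmap, so the order coincides with the numeric order of $i$ and $j$. The only caveat is precisely the index bookkeeping you flag: under a literal reading of Lemma~\ref{lemma:order.prefix} your instantiation should be $i'=t-2$ rather than $i'=t-1$ (your derivation places the differing bit at position $L-t+1$ and the agreeing suffix at $[L-t+2,L]$, not at $L-t$ and $[L-t+1,L-1]$), with $t=1$ corresponding to the ``no common suffix'' case that the paper treats separately --- a harmless off-by-one, especially since the paper's own statements are equally loose about these indices.
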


\begin{proof}
Let $\tau_i=inv(i,\lceil \lg \sigma \rceil)$ and $\tau_j=inv(j,\lceil \lg \sigma \rceil)$. 
If $\tau_i$ and $\tau_j$ do not share any common suffix, then their relative 
positions in $\B_{\lceil \lg \sigma \rceil}$ depend only on their last bit and 
the relation is given by that bit. Otherwise, $\tau_i$ and $\tau_j$ share a 
common suffix of length $\lceil\lg\sigma\rceil-\delta+1 
\in[1,\lceil \lg \sigma \rceil]$, that is, 
$\tau_i[\delta,\lceil\lg\sigma\rceil]= \tau_j[\delta,\lceil\lg\sigma\rceil]$.
Then, according to Lemma~\ref{lemma:order.prefix},
$\tau_i$ is before $\tau_j$ iff $\tau_i[\delta]< \tau_j[\delta]$. 
In both cases the relation is given by the last distinct bit of the codes,
or the first if they are read backwards. Since the codes are of the same 
length, comparing by the first distinct bit is equivalent to comparing
numerically. That is, $\tau_i$ is before $\tau_j$ iff
$inv(\tau_i,\lceil \lg \sigma \rceil)<
inv(\tau_j,\lceil \lg \sigma \rceil)$. In turn, since
$inv(inv(c,\ell),\ell)=c$, this is equivalent to $i<j$.
\end{proof}

The lemma gives a way to force a desired order in a set of fixed-length codes:
Given symbols $a \in [0,\sigma)$, we can assign them codes $c_a = inv(a,\lceil
\lg \sigma \rceil)$ to ensure that the areas become ordered in 
$\B_{\lceil \lg \sigma \rceil}$. As a side note, we observe that we could have
retained the symbol order natively in the wavelet matrix if we had chosen to
decompose the symbols from their least to their most significant bit, and not
the other way (in this case the wavelet matrix is actually radix-sorting the
values). This brings problems in the extended variants, however, 
because the resulting range of codes has unused entries if $\sigma$ is not
a power of 2. For example, consider alphabet $0,1,2,3,4 = 000,\ldots,100$; after
reversing the bits we obtain numbers $0,1,2,4,6$, so we need to allocate 7 
cells for $C$ instead of 5. The size of $C$ can double in the 
worst case. We cannot either directly use the idea of reversing the canonical
Huffman codes, because the codes could not be prefix-free anymore. A more
sophisticated scheme, based on Lemma~\ref{lemma:inverted.codes}, is required.

Assume we have obtained the desired code lengths $\ell_a$, as well as the 
array $nCodes$ from the canonical Huffman construction.
We generate the final Huffman tree in levelwise order.
The simplest description is as follows. We start with a set
of valid codes $\mathcal{C} = \{ 0, 1 \}$ and level $\ell=1$. At each level
$\ell$, we remove from $\mathcal{C}$ the $nCodes[\ell]$ codes $c$ with minimum
$inv(c,\ell)$ value. The removed nodes are assigned to the $nCodes[\ell]$
symbols that require codes of length $\ell$. Now we replace each code $c$ 
remaining in $\mathcal{C}$, by two new codes, $c:0$ and $c:1$, and continue
with level $\ell+1$. It is clear that this procedure generates a prefix-free
set of codes that, when reversed, satisfy that the codes finishing at a level
are smaller than those that continue.

It is not hard to see that the total cost of this algorithm is linear.
There are two kind of codes inserted in $\mathcal{C}$:
those that will be chosen for a code and those that will not. There are
exactly $\sigma$ nodes of the first class, whereas for each node of the second
class we insert other two codes in $\mathcal{C}$. Therefore the total number
of codes ever inserted in $\mathcal{C}$ adds up to $O(\sigma)$. The codes to
use at each level $\ell$ can be obtained by linear-time selection over the
set of codes just extended (sorting codes by $inv(c,\ell)$), thus adding up to 
$O(\sigma)$ time as well.

Figure~\ref{fig:huffmatrix} gives an example of the construction.

\begin{figure}[t]

\vspace*{1cm}

 \includegraphics[height=2.8cm]{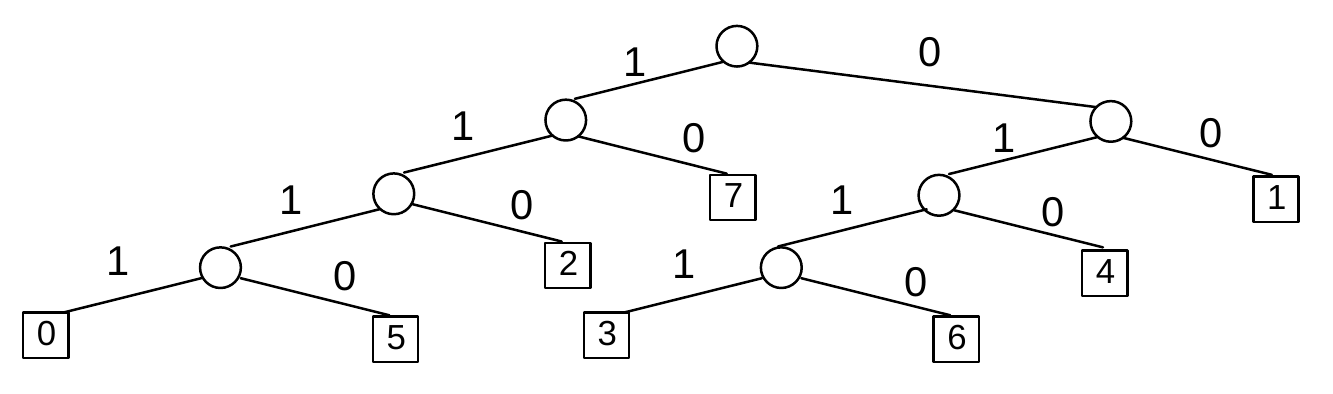}   \hfill
\vspace*{-3.75cm}

 \hfill \includegraphics[height=4.8cm]{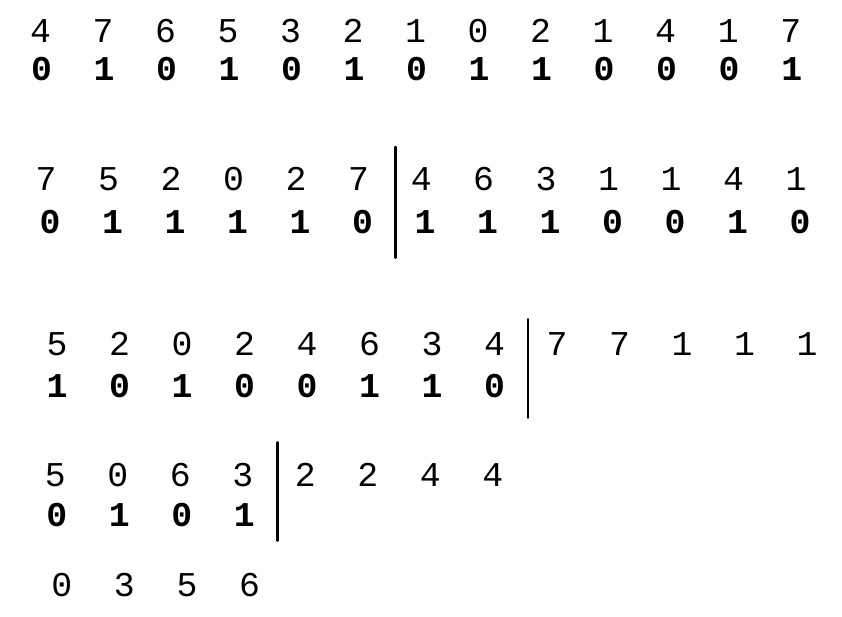}

  \caption{On the left, the Huffman tree resulting from our code reassignment
algorithm on the running example. On the right, the resulting Huffman shaped 
wavelet matrix.}
  \label{fig:huffmatrix}
\end{figure}

\section{Experimental Results}

Our implementations build over the wavelet tree implementations of \libcds, 
a library implementing several space-efficient data structures.%
\footnote{{\tt https://github.com/fclaude/libcds}}
For each wavelet tree/matrix
variant we present two versions, \verb|CM| and \verb|RRR|. The first one
corresponds to using the the $\rank/\select$ enabled bitmap 
implementation~\cite{GGMN05} of the proposals of Clark~\cite{Cla96} and 
Munro~\cite{Mun96}, choosing 5\% space overhead over the plain bitmap.
The second version, \verb|RRR|, corresponds to using the bitmap 
implementation~\cite{CN08} of the compressed bitmaps of Raman, Raman and 
Rao~\cite{RRR07}. 
The variants compared are the following:

\begin{itemize}
\item 
\verb|WT|: standard pointer-based wavelet tree;
\item 
\verb|WTNP|: the (extended) pointerless wavelet tree (``No Pointers'');
\item 
\verb|WM|: the (extended) wavelet matrix (Section~\ref{sec:wmatrix});
\item
\verb|HWT|: the Huffman shaped standard pointer-based wavelet tree;
\item
\verb|HWTNP|: the Huffman shaped extended levelwise wavelet tree
	(Section~\ref{sec:levelhuff});
\item
\verb|HWM|: the Huffman shaped (extended) wavelet matrix
	(Section~\ref{sec:comprwmatrix});
\item	
\verb|AP|: the alphabet-partitioned data structure of Barbay et 
al.~\cite{BCGNN13}, which is the best state-of-the-art alternative to wavelet
trees.
\end{itemize}

These names are composed with the bitmap implementations by appending the
bitmap representation name. For example, we call \verb|WT-RRR| the standard
pointer-based wavelet tree with all bitmaps represented with Raman, Raman and
Rao's compressed bitmaps. \verb|AP| uses always \verb|CM| bitmaps,
which is the best choice for this structure.

Note that all the pointerless structures use the array $C$. The extended
versions generally achieve space very close to the strict ones and perform
much faster.

\subsection{Datasets}

In order to evaluate the performance of $\access$, $\rank$ and $\select$, we 
use four different datasets:\footnote{Left at {\tt http://lbd.udc.es/research/ECWTLA}}

\begin{itemize}

\item \verb|ESWiki|: Sequence of word identifiers generated by
stemming the Spanish Wikipedia\footnote{{\tt http://es.wikipedia.org}
dated 03/02/2010.} with the Snowball algorithm.
The sequence has length $n=200{,}000{,}000$, alphabet 
size $\sigma=1{,}634{,}145$, and zero-order entropy $H_0=11.12$. 
This sequence can be used to
simulate a positional inverted index \cite{CN08,AGO10,BFLN12}.

\item \verb|BWT|: The Burrows-Wheeler transform (BWT) \cite{BW94}
of \verb|ESWiki|.
The length and size of the alphabet, as well as the zero-order entropy,
match those of \verb|ESWiki|. However, \verb|BWT| has a much lower
high-order entropy \cite{Man01}. Many full-text compressed self-indexes 
\cite{FM05,FMMN07,NM07} use the BWT of the text they represent.

\item \verb|Indochina|: The concatenation of all adjacency lists
of Web graph {\tt Indochina\-2004}, available at the WebGraph project.%
\footnote{{\tt http://law.dsi.unimi.it}} The length of the
sequence is $n=100{,}000{,}000$, the alphabet size $\sigma=2{,}705{,}024$, 
and the entropy is $H_0=15.69$.
This representation
has been used to support forward and backward traversals on the graph 
\cite{CN08,CN10}.

\item \verb|INV|: Concatenation of inverted lists for a 
random sample of $2{,}961{,}510$ documents from the English Wikipedia.%
\footnote{{\tt http://en.wikipedia.org}} This sequence has length
$n=338{,}027{,}430$ and its alphabet size is $\sigma=2{,}961{,}510$. From this sequence
we extract the first $n=180{,}000{,}000$ elements with an alphabet of size $\sigma=1{,}590{,}398$ 
and an entropy of $H_0=19.01$.
This sequence has been used to simulate document inverted indexes 
\cite{NP10,GNP11}.

\end{itemize}

In order to evaluate the range search performance over discrete grids, we use 
the following three datasets formed by synthetic and real collections of
MBRs (Minimum Bounding Rectangles of objects).
We insert the two opposite corners of each MBR as points in our dataset.

\begin{itemize}
 \item \verb|Zipf|: A synthetic collection of 
 $1{,}000{,}000$ MBRs 
 with a Zipfian distribution (world size = $1{,}000\times 1{,}000$,
$\rho=1)$.%
\footnote{{\tt http://lbd.udc.es/research/serangequerying}}
 \item \verb|Gauss|: A synthetic collection of contains $1{,}000{,}000$ MBRs
 with a Gaussian distribution (world size = $1{,}000\times 1{,}000$,
$\mu=500,\sigma=200)$.%
\footnotemark[\value{footnote}]
 \item \verb|Tiger|: A real collection of $2{,}249{,}727$ MBRs from 
 California roads, available at the U.S. Census Bureau.%
\footnote{{\tt http://www.census.gov/geo/www/tiger}}
\end{itemize}

For range searches we cannot use Huffman compression, because the order
of the symbols is not maintained at the leaves. \verb|AP| also shuffles the
alphabet, so it cannot be used in this scenario. Extended variants are not
a good option either, because in this case it holds $\sigma=n$. Thus we test 
only the strict variants of $\verb|WTNP|$ and $\verb|WM|$.

\subsection{Measurements}


To measure performance we generated $100{,}000$ inputs for each query
and averaged their execution time, running each query $10$ times.
The $\access(S,i)$ queries were generated by choosing positions $i$ uniformly 
at random in $[1,n]$. Queries $\rank_a(S,i)$ were generated by choosing $i$
uniformly at random, and then setting $a=S[i]$. Each $\select_a(S,j)$ query 
was generated by first choosing a position $i$ at random in $[1,n]$, then 
setting $a=S[i]$, and finally choosing $j$ at random in $[1,\rank_a(S,n)]$.
The resulting distribution is the most common in applications, and it obtains
the $O(H_0(S)+1)$ average time performance in the Huffman shaped variants.

To measure the performance on point grids, for synthetic collections 
we generate sets of queries covering $0.001\%$, $0.01\%$,
$0.1\%$, and $1\%$ of the grid area. 
The sets contain $1{,}000$ queries, each with a
ratio between both axes varying uniformly at random between $0.25$ and $2.25$. 
For the real data set \verb|Tiger|, we use as queries the following four collections 
(also available for downloading at the Web site of \verb|Tiger|): \verb|Block|
(groups of buildings), \verb|BG| (block groups), \verb|SD| (elementary, secondary, and unified school districts), and \verb|COUSUB| (country 
subdivisions). 

The machine used is an Intel(R) Xeon(R) E5620 running at $2.40$GHz
with $96$GB of RAM memory. The operating system is
GNU/Linux, Ubuntu 10.04, with kernel 2.6.32-33-server.x86\_64. All our
implementations use a single thread and are coded in {\tt C++}. The
compiler is \verb|gcc| version $4.4.3$, with \verb|-O9| optimization.

\subsection{Results on Sequences}

Figures~\ref{chart:access} to \ref{chart:select} show the time and space for 
the different data structures and configurations for $\access$, $\rank$ and
$\select$ queries. The black vertical bar on the plots shows the value of $H_0$.
The bitmaps are parametrized by setting their sampling 
values to $32$, $64$, and $128$. In the case of \verb|AP|, these bitmap 
samplings are combined with permutation samplings 4, 16, and 64, respectively,
and all are run with $\ell_{min}=10$, as in previous work \cite{BCGNN13}.

\paragraph{Space.}
We start by discussing the space usage, which we measure in bits per symbol 
(bps). First we note that the \verb|WM| variants use always the same space as
the corresponding \verb|WTNP| variants (while being faster, as we discuss
soon). The space of \verb|WTNP-CM| and \verb|WM-CM| is obviously
close to $\lceil \lg\sigma\rceil$ bps. The extra space incurred by \verb|WT-CM|
is the overhead of the wavelet tree pointers, and is roughly proportional to 
$\sigma/n$ (times some implementation-dependent constant). This amounts to
nearly 4 bps in \verb|ESWiki| and \verb|BWT|, but 3.5 times more (14 
bps) in \verb|Indochina|, as expected from its larger alphabet size,
and again 4 bps in \verb|INV|. On the other hand, 
the space of \verb|HWTNP-CM| and \verb|HWM-CM| is always close to $H_0$ bits 
per symbol, plus a small extra to store the Huffman model.
The space overhead of
\verb|HWT-CM| on top of those corresponds, again, to the wavelet tree pointers.

The sampling parameter affects more sharply the \verb|RRR| variants, as they
store more data per sample.
The difference between \verb|WTNP-RRR| or \verb|WM-RRR| and \verb|WT-RRR| is
also proportional to $\sigma/n$, but this time the constant is higher because
the \verb|RRR| implementation needs more constants to be stored per bitmap
(i.e., per wavelet tree node). Thus the penalty is 6 bps on \verb|ESWiki| and
\verb|BWT|, 21 bps (3.5 times more) on \verb|Indochina|, and 7 bps 
on \verb|INV|. The same differences can be observed between
\verb|HWTNP-RRR| or \verb|HWM-RRR| and \verb|HWT-RRR|. We return later to the
fact that \verb|HWM-RRR| takes more space than \verb|HWTNP-RRR| on \verb|Indo| 
and \verb|INV|.

Finally, how \verb|WTNP-RRR|/\verb|WM-RRR| and \verb|HWTNP-RRR|/\verb|HWM-RRR|
compare to \verb|HWTNP-CM|/\verb|WM-CM| depends strongly on the type of 
sequence. In general, \verb|RRR| compression achieves the zero-order entropy
as an upper bound, but it can reach much less when the sequence has local
regularities. On the other hand, \verb|RRR| representation poses an additive 
overhead of 27\% of $\lg\sigma$, which corresponds to the $o(n\lg\sigma)$ 
overhead in this implementation \cite{CN08}. When combining Huffman and bitmap 
compression, this 27\% overhead acts over $H_0$ and not over $\lg\sigma$, which
brings it down, but on the other hand we must add the overhead of storing the
Huffman model. On \verb|ESWiki|, which has no special properties, the 27\%
overhead is around 5.7 bps, showing that \verb|RRR| compression reaches around 
8.3 bps, well below $H_0$. When combining with Huffman compression, this 
overhead becomes 14\%, that is, nearly 3 bps. Added to the 8.3 bps and to the 
1 bps of the Huffman model overhead, we still get slightly more space than 
plain Huffman compression, which is the best choice and reaches only 10\%
overhead over the zero-order entropy.

The picture changes when we consider \verb|BWT|. The Burrows-Wheeler transform
of \verb|ESWiki| boosts its higher-order compressibility \cite{Man01},
which is captured by \verb|RRR| compression \cite{MNimplicit07}, making 
\verb|RRR| compression reach the same space of Huffman compression, despite its
27\% space overhead. When combining both compressions, the result breaks the
zero-order entropy barrier by more than 10\% and becomes the best choice in
terms of space.

\verb|RRR| gives another surprising result on \verb|Indochina| and \verb|INV|,
where bitmap compression alone is more space-effective than in combination with
Huffman compression, and breaks the zero-order entropy by a large margin. This 
cannot be explained by high-order compressibility, as in this case the 
combination with Huffman would not harm. This behavior corresponds to the 
special nature of these sequences: the adjacency lists of the graph and the
inverted lists are sorted in increasing order. Long increasing sequences induce
long runs of 0s and 1s in the bitmaps of the wavelet trees and matrices. Those 
are retained in deeper levels when our partitioning by the most significant 
bit is used.\footnote{This is another advantage over using the least 
significant bit, as this partitioning breaks the runs faster.} The Huffman 
algorithm, instead, combines the nodes in unpredictable ways and destroys 
those long runs. 
Still, our Huffman algorithm maintains the order between those symbols whose
codewords have the same length, and thus the impact of this reordering is
not as high as it could be. Instead, the Huffman wavelet matrix completely
reshuffles the symbols. As a result, for example, the space of \verb|HWM-RRR|
exceeds that of \verb|HWTNP-RRR| by around 5 bps on \verb|Indo| and 6--7 bps on
\verb|INV|.

\paragraph{Time.}

The time results are rather consistent across collections. 
Let us first consider operation $\access$.
If we start considering the variants that do not use Huffman compression, 
we have that \verb|WT-RRR| is about 10\%--25\% slower than \verb|WT-CM|,
which is explained by a more complex implementation \cite{CN08}.
Instead, the pointerless variant, \verb|WTNP-CM|, is 20\%--25\% slower 
(recall that, in their extended variant, these require twice the number of
$\rank$ operations, but locality of reference makes them faster than twice 
the cost of one $\rank$ operation). However, \verb|WTNP-RRR| is about 40\%
slower than \verb|WT-RRR|, as the $\rank$ operation is slower and its higher
number impacts more on the total time (but still locality of reference makes
the percentage much less than 100\%). The wavelet matrix, instead, carries
out the same number of $\rank$ operations than the pointer-based wavelet tree,
so this time penalty disappears. Actually, \verb|WM-CM| is 8\%--14\% {\em
faster} than \verb|WT-CM|, and \verb|WM-RRR| is up to 4\% faster than
\verb|WT-RRR|. This may be due to less memory usage, which increases locality
of reference. Finally, the use of Huffman compression improves times by 
about $H_0/\lg\sigma$, as expected: times are reduced to about 50\%--60\% on
\verb|ESWiki| and \verb|BWT|, to about 65\%--85\% on 
\verb|Indochina|, and there is almost no reduction on \verb|INV|. 

The situation is basically the same for operation $\rank$, as expected from
the algorithms. The times are usually slightly lower because it is not 
necessary to access the bitmaps as we descend. 
The use of the wavelet matrix still gives essentially the same time (and even
slightly faster) than a pointer-based wavelet tree, and 
the use of Huffman shaped trees reduces the times by the same factors
as for $\access$, as expected.

The times of operation $\select$ show less difference between the standard and
the pointerless variants, because performing one extra $\rank$ operation is 
less relevant compared to the original (slower) $\select$ operation on the
bitmaps. One can see that \verb|WTNP-CM| is 30\%--40\% slower than
\verb|WT-CM| and that \verb|WTNP-RRR| is 35\%--50\% slower than \verb|WT-RRR|. 
The difference between plain and compressed bitmaps does not vary much, on
the other hand: \verb|WT-RRR| is 25\%--30\% slower than \verb|WT-CM|. What is
more surprising is that the wavelet matrix is clearly slower than the 
pointer-based wavelet trees: \verb|WM-CM| is 10\%--15\% slower than
\verb|WT-CM| and \verb|WM-RRR| is 20\%--30\% slower than \verb|WT-RRR|. The
reason is that the implementations of $\select$ \cite{GGMN05,CN08} proceed 
by binary search on the sampled values, thus their cost has in practice a
component that is logarithmic on the bitmap length. The bitmaps on the wavelet
tree nodes are shorter than $n$, whereas in the wavelet matrix (and the
pointerless wavelet tree) they are always of length $n$. Indeed, the wavelet
matrix is faster than the pointerless wavelet tree: \verb|WM-CM| is
20\%--25\% faster than \verb|WTNP-CM| and \verb|WM-RRR| is 12\%--15\% faster 
than \verb|WTNP-RRR|. Once again, the use of Huffman reduces all the times by
about the same space fraction obtained by zero-order compression.

\paragraph{Bottom line.}

On \verb|ESWiki|, where zero-order compression is the dominant space factor,
our Huffman shaped wavelet matrix, \verb|HWM-CM|, obtains the best space 
(only 10\% off the zero-order entropy) and the best time, by a good margin.

On \verb|BWT|, where higher-order compression is exploited by \verb|RRR|,
the space-time tradeoff map is dominated by the combination of \verb|HWM-RRR|
(minimum space) and \verb|HWM-CM| (minimum time), the two variants of our
Huffman shaped wavelet matrix. The former breaks the zero-order entropy 
barrier by about 10\%.

On \verb|Indochina| and \verb|INV|, where \verb|RRR| achieves space gains that 
are 
only degraded by Huffman compression, the dominant techniques are variants of 
the wavelet matrix: \verb|WM-RRR| (least space) and \verb|HWM-CM| (least time).
The former takes about 75\% of the zero-order entropy.

Summarizing, the wavelet matrix variants obtain the same space of the
pointerless wavelet trees, but they operate in about 65\% of their time,
reaching basically the same performance of the pointer-based variants but
much less space. As a result, they are always the dominant technique. Which
variant is the best, \verb|HWM-CM|, \verb|HWM-RRR| or \verb|WM-RRR|, depends
on the nature of the collection.

The comparison with \verb|AP| is interesting. In collections similar to
\verb|ESWiki|, Barbay et al.~\cite{BCGNN13} show that \verb|AP| generally 
achieves the best 
space and time among the alternatives \verb|WTNP-RRR|, \verb|WTNP-CM|, 
\verb|WT-CM|, and \verb|WT-RRR|, thus becoming an excellent choice in that
group. The new alternatives we have developed, however, clearly outperform 
\verb|AP|
in space: pointerless Huffman compression, and in particular Huffman wavelet 
matrices, improve upon the old wavelet tree alternatives in both space and 
time, using much less space than \verb|AP|. Still, \verb|AP| is a faster
representation, only slightly faster in operations $\access$ and
$\rank$, and definitely faster in operation $\select$.
The other collections also demonstrate that 
wavelet trees and matrices can exploit other compressibility features of the 
sequences apart from $H_0$, whereas \verb|AP| is blind to those (this is also
apparent in their experiments \cite{BCGNN13}, even using the basic wavelet
tree variants).

\subsection{Results on Point Grids}

Figures \ref{chart:range.count} and \ref{chart:range.report}
show the performance of \verb|WTNP| and \verb|WM| for $\rc$ and $\rr$ queries,
respectively. It turns out that, in the first level of each wavelet tree, the 
number of zeros and ones is highly unbalanced when the grid size is far from
the next power of 2. This makes the entropy of the first bitmap rather low,
whereas the other bitmaps are more balanced. On the other hand, the range
search algorithms spend just a few $\rank$ operations on the first bitmap.
To take advantage of this feature, we compress the bitmap of the first 
level of both data structures, \verb|WTNP| and \verb|WM|, with \verb|RRR|
and with a sampling of 32. The rest of bitmaps are represented using \verb|CM| 
with sampling rates of $32$, $64$, and $128$. 

In both figures \ref{chart:range.count} and \ref{chart:range.report} we 
append to the name of the data structure the name of the query set. This takes 
values in $\{Q0001, Q001, Q01, Q1\}$ in case of synthetic collections. In case 
of the real collection \verb|Tiger|, it takes values in 
$\{$BLock, BG, SD, COUSUB$\}$.

\begin{figure}[p]
\begin{center}
\includegraphics[angle=90,scale=1]{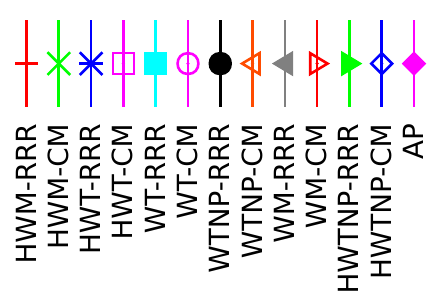} \\
\includegraphics[angle=90,scale=0.63]{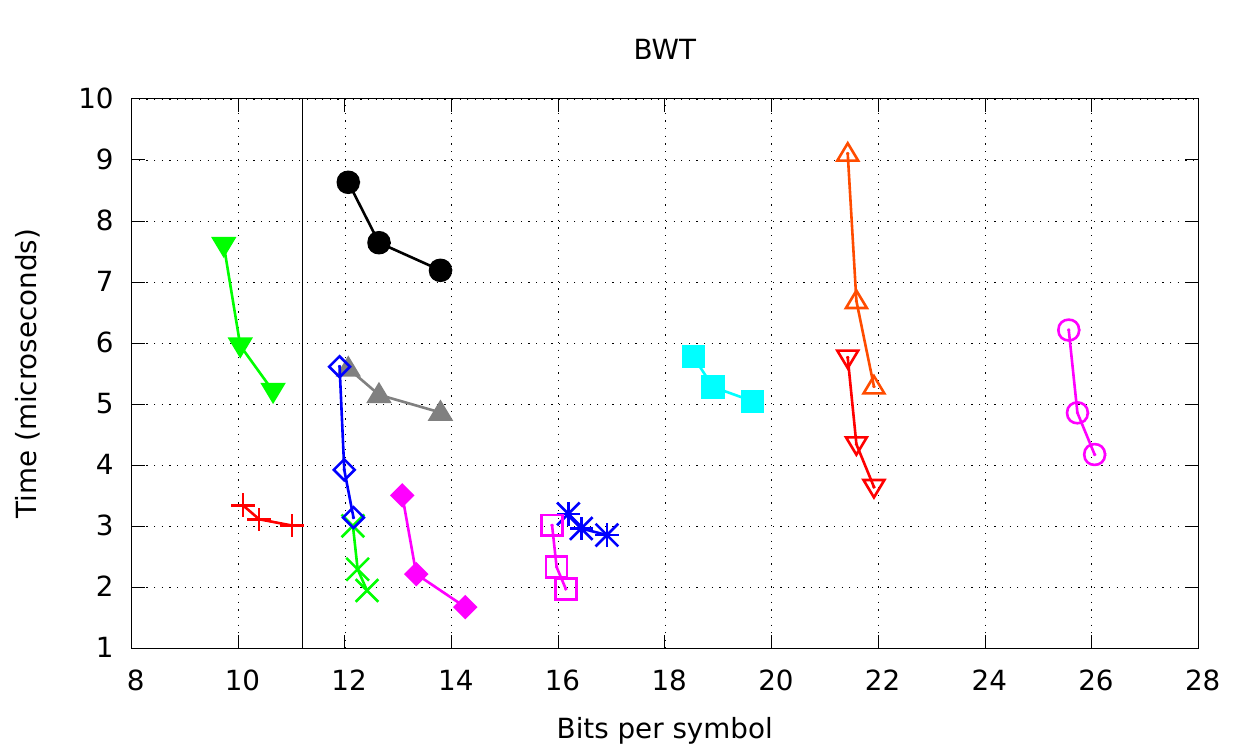}
\includegraphics[angle=90,scale=0.63]{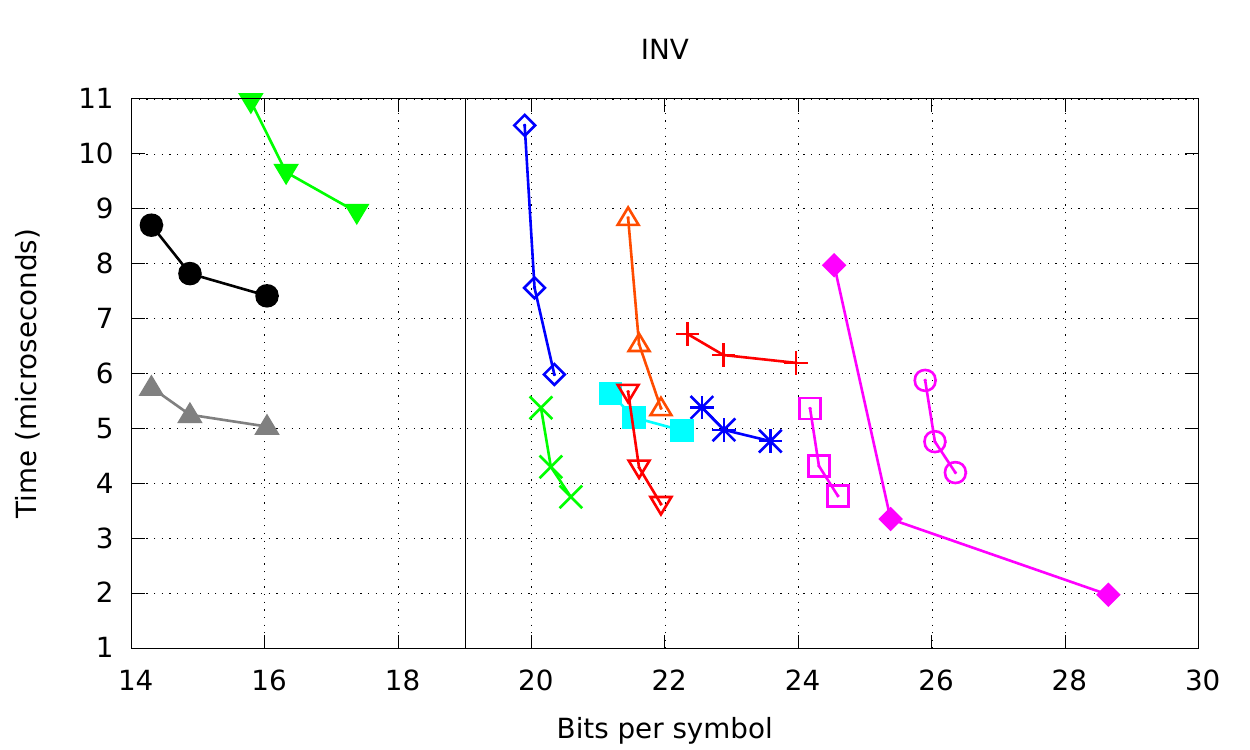} \\
\includegraphics[angle=90,scale=0.63]{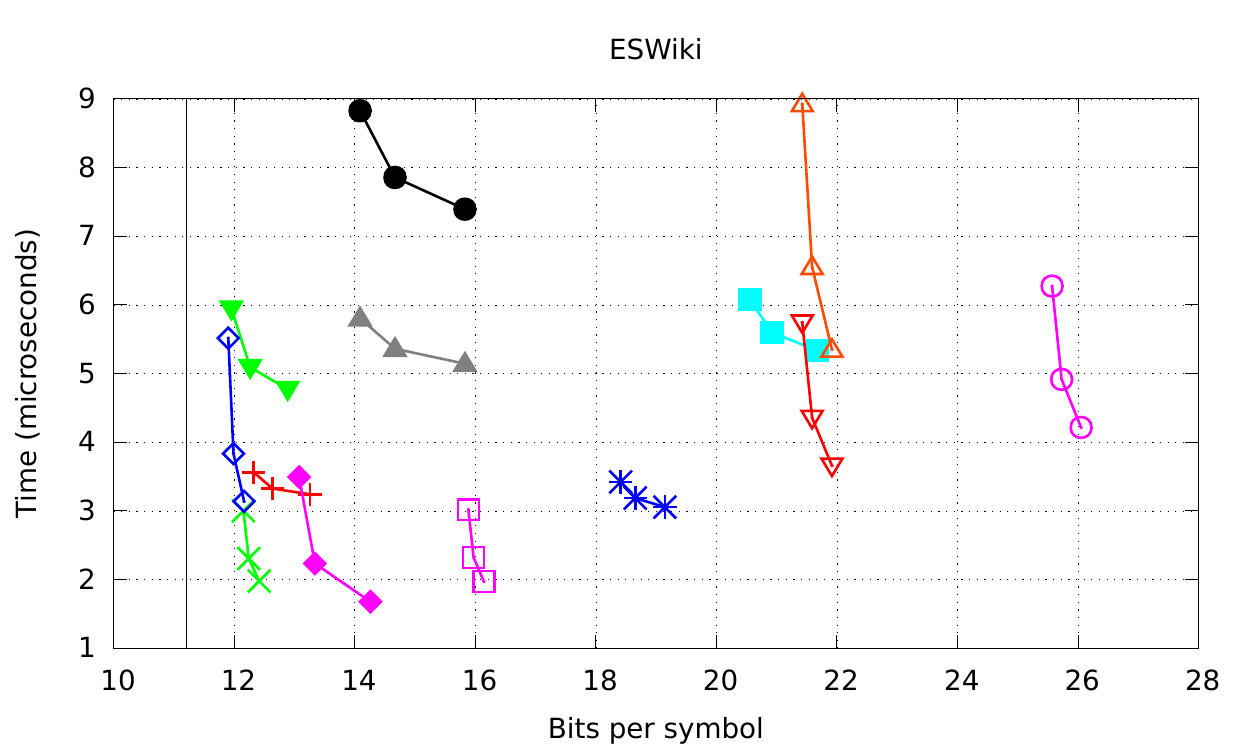}
\includegraphics[angle=90,scale=0.63]{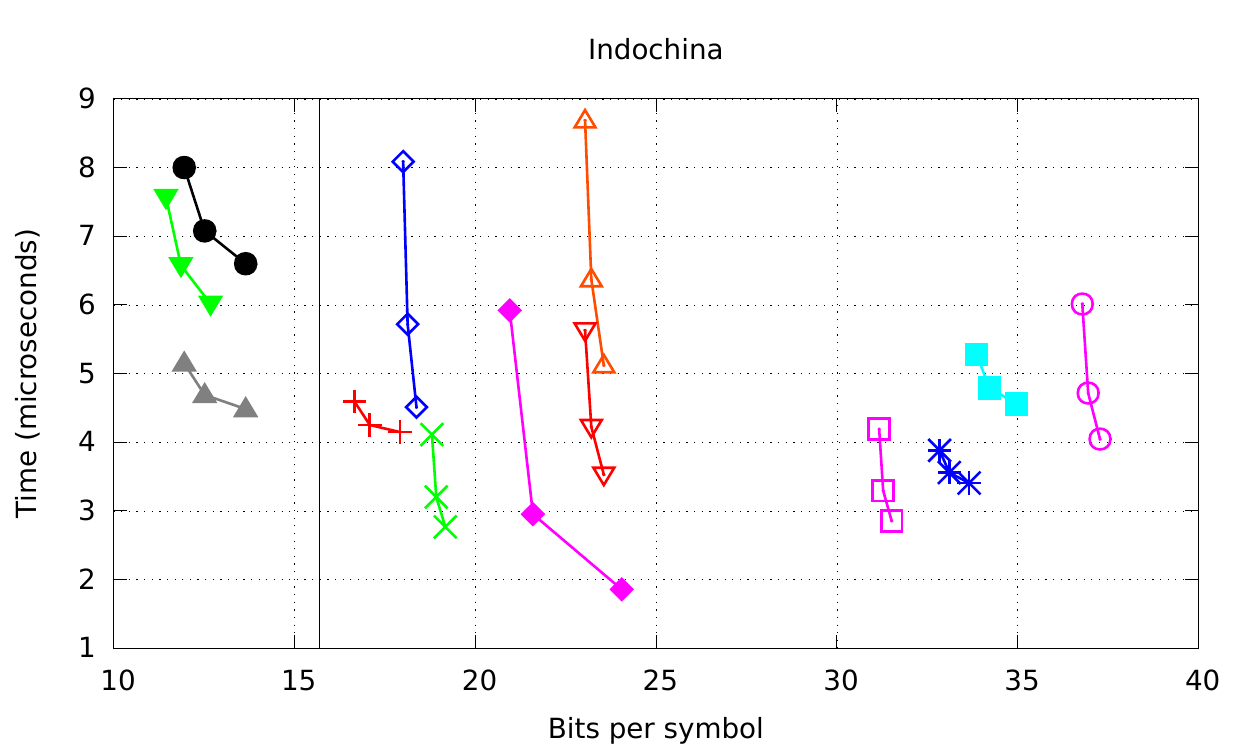}

\end{center}

\caption{Running time per $\access$ query over the four datasets.}
\label{chart:access}
\end{figure}

\begin{figure}[p]

    \begin{center}
    \includegraphics[angle=90,scale=1]{legend_sequences.pdf} \\    
    \includegraphics[angle=90,scale=0.63]{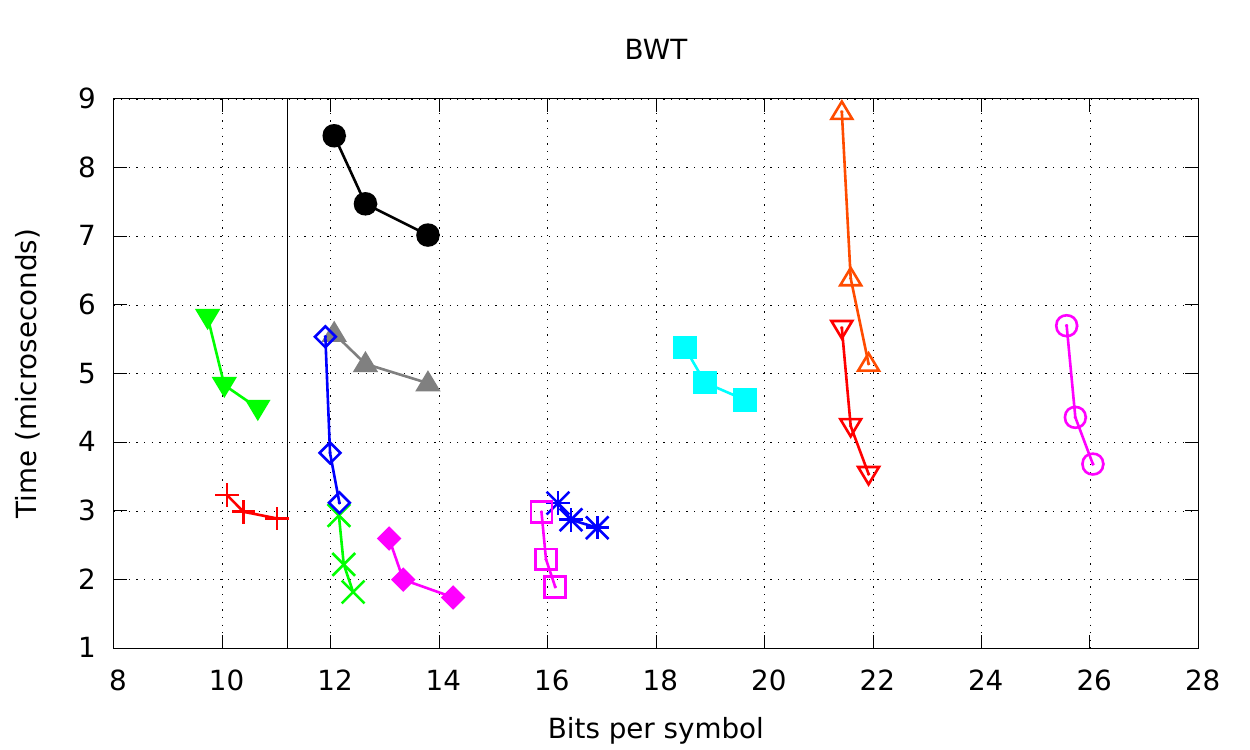}
    \includegraphics[angle=90,scale=0.63]{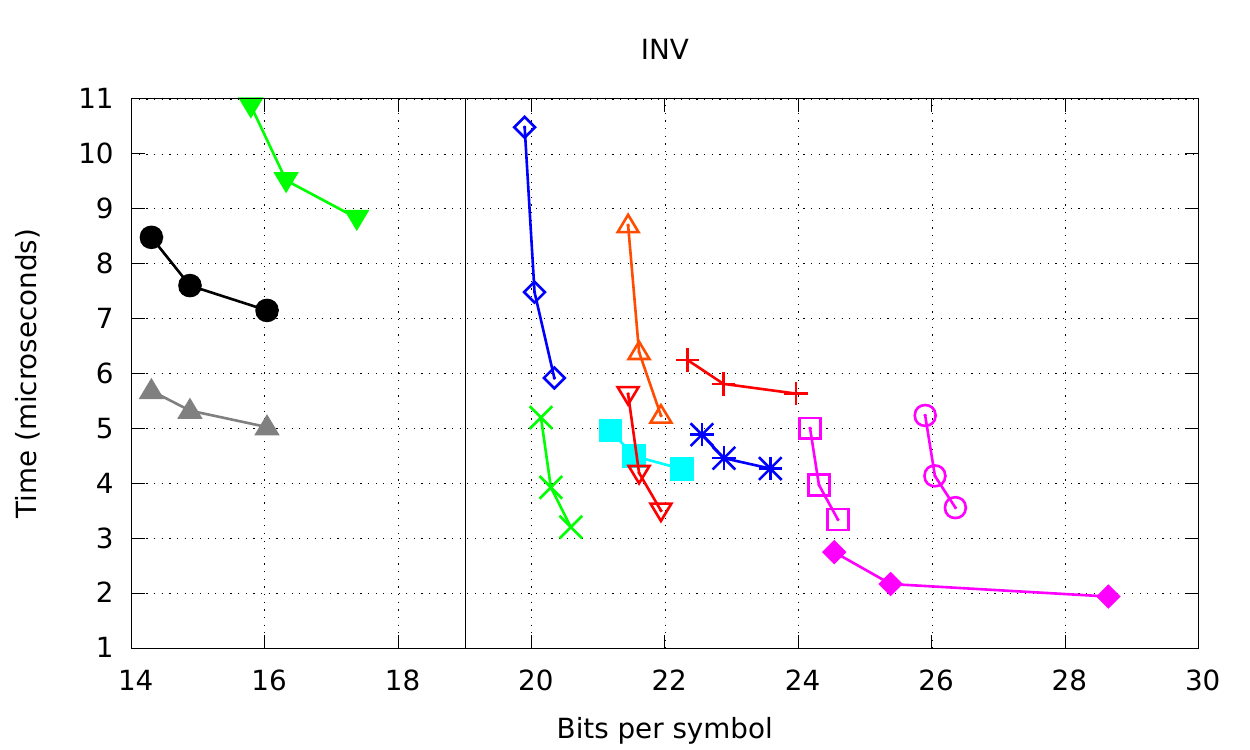} \\
    \includegraphics[angle=90,scale=0.63]{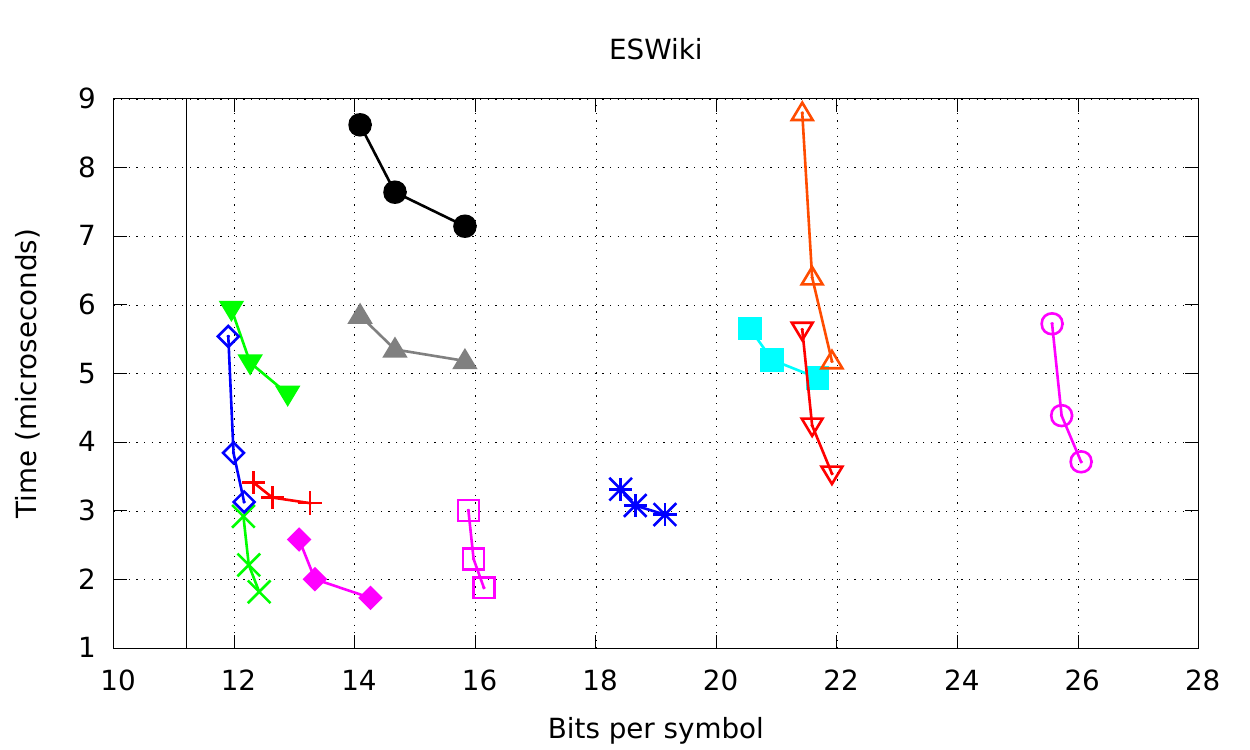}
    \includegraphics[angle=90,scale=0.63]{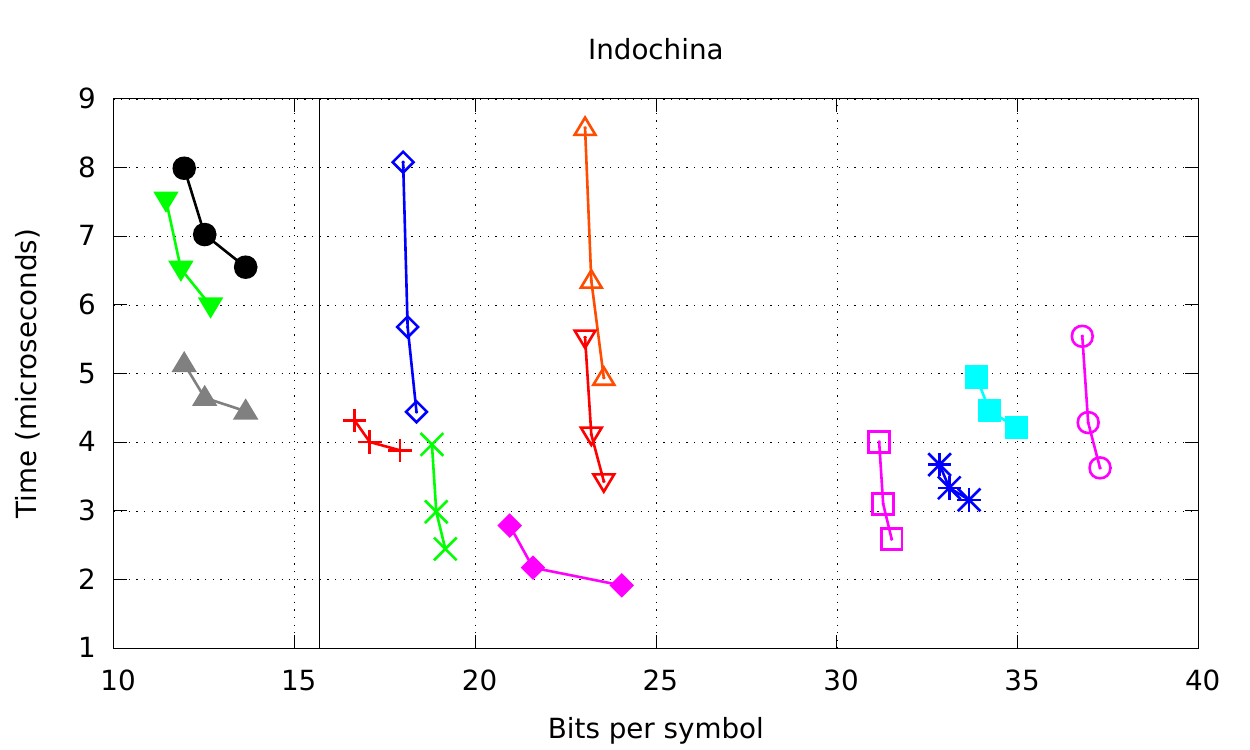}

    \end{center}

\caption{Running time per $\rank$ query over the four datasets.}
\label{chart:rank}
\end{figure}

\begin{figure}[p]
  \begin{center}
 \includegraphics[angle=90,scale=1]{legend_sequences.pdf} \\
  \includegraphics[angle=90,scale=0.63]{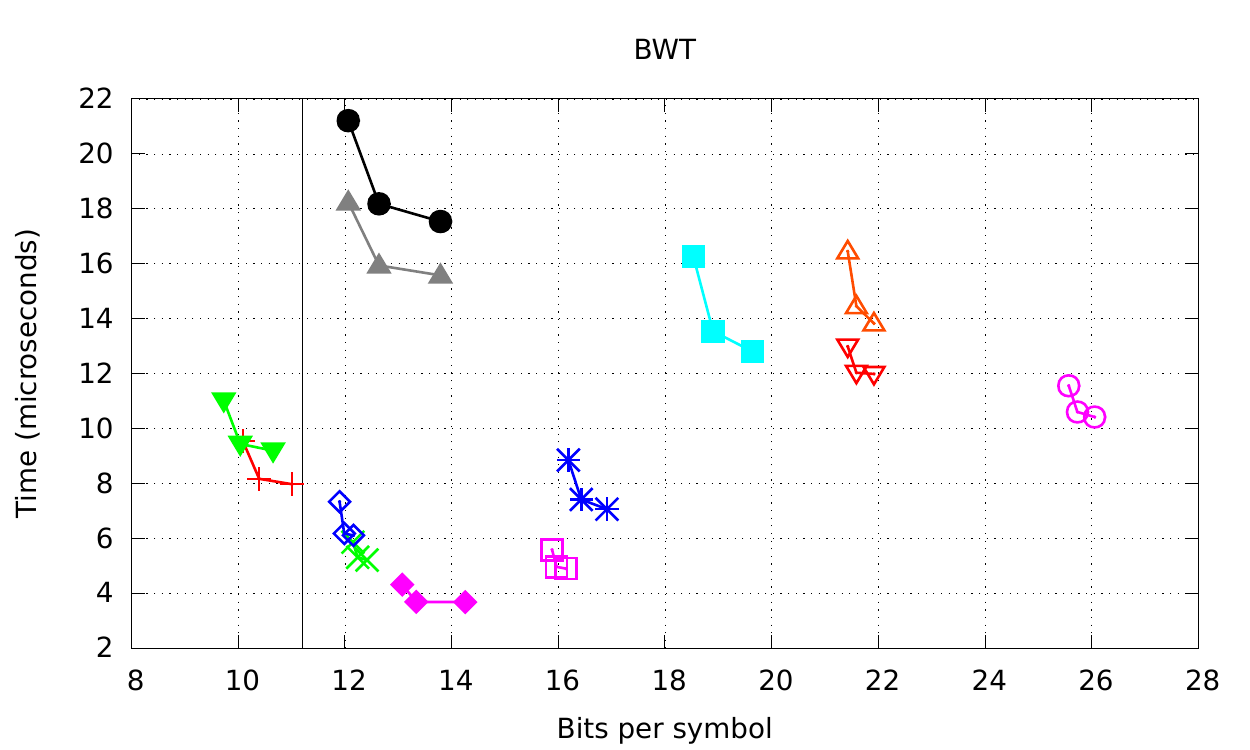}
  \includegraphics[angle=90,scale=0.63]{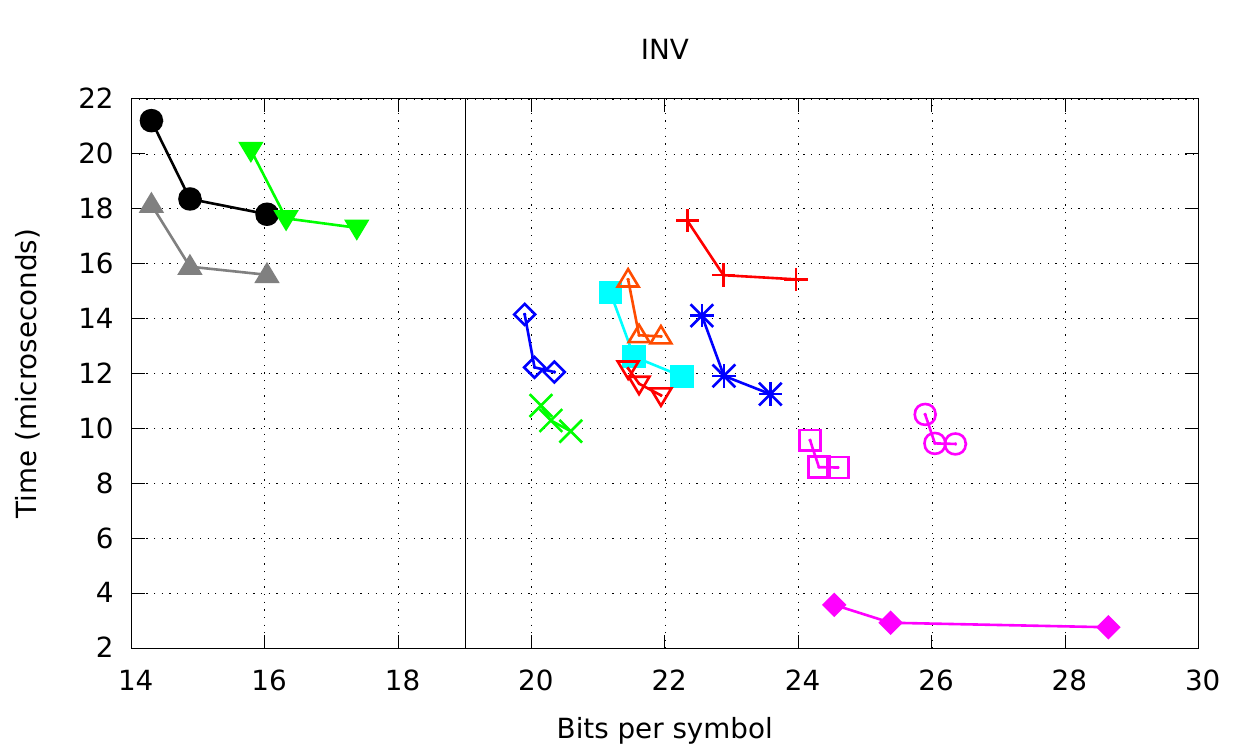} \\
  \includegraphics[angle=90,scale=0.63]{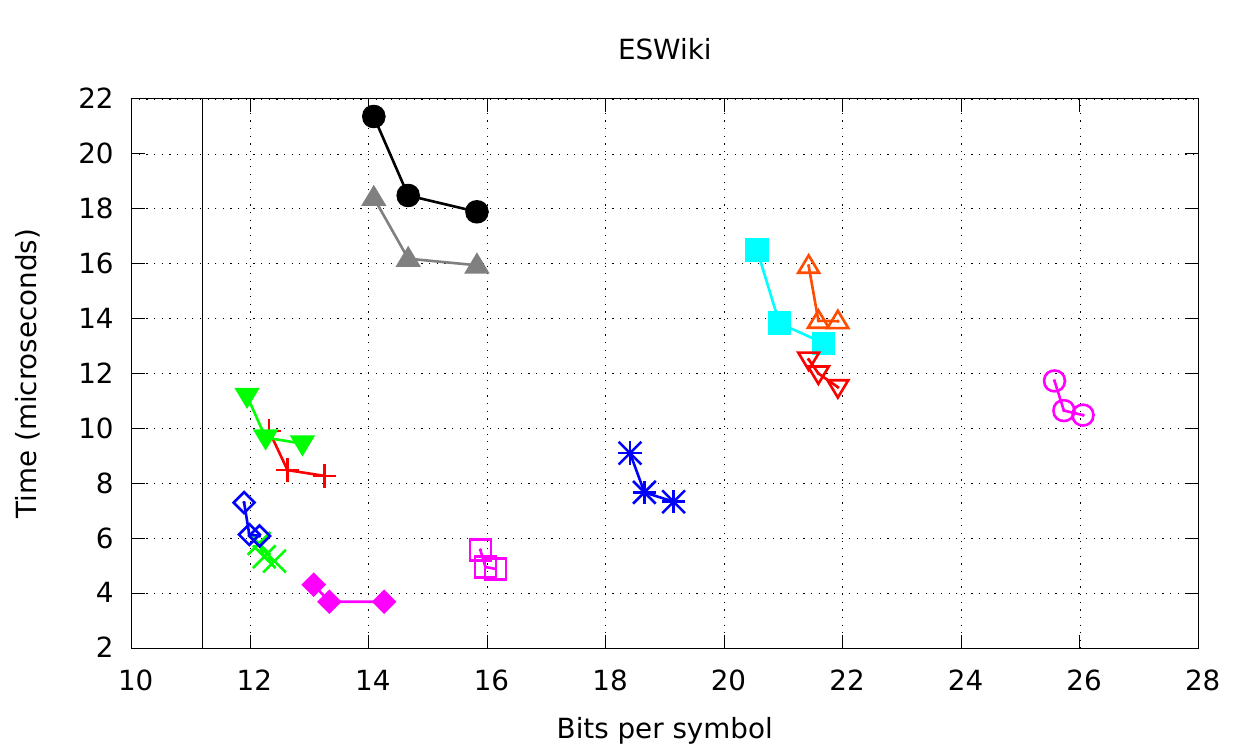}
  \includegraphics[angle=90,scale=0.63]{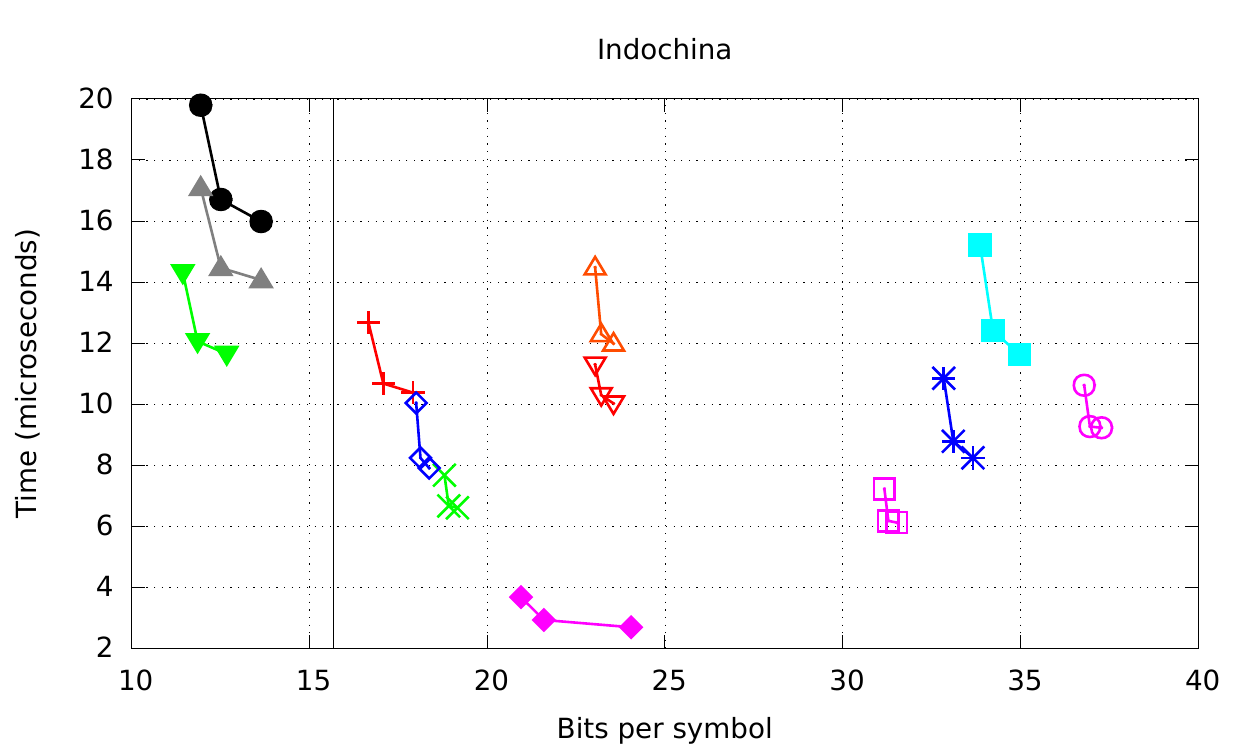}

  \end{center}

\caption{Running time per $\select$ query over the four datasets.}
\label{chart:select}
\end{figure}

\begin{figure}[p]
\begin{center}

\includegraphics[scale=0.8]{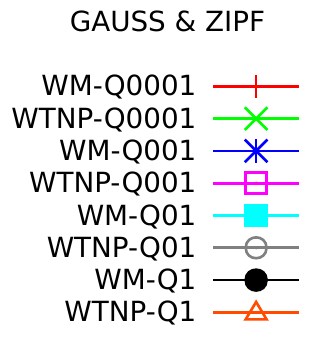} 
\hspace{1.8cm}
\includegraphics[scale=0.8]{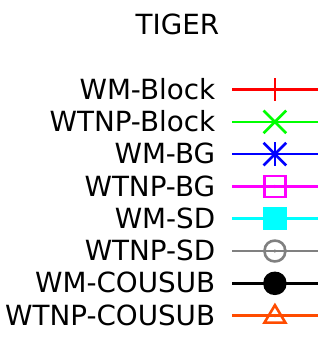}\\
\includegraphics[scale=0.7]{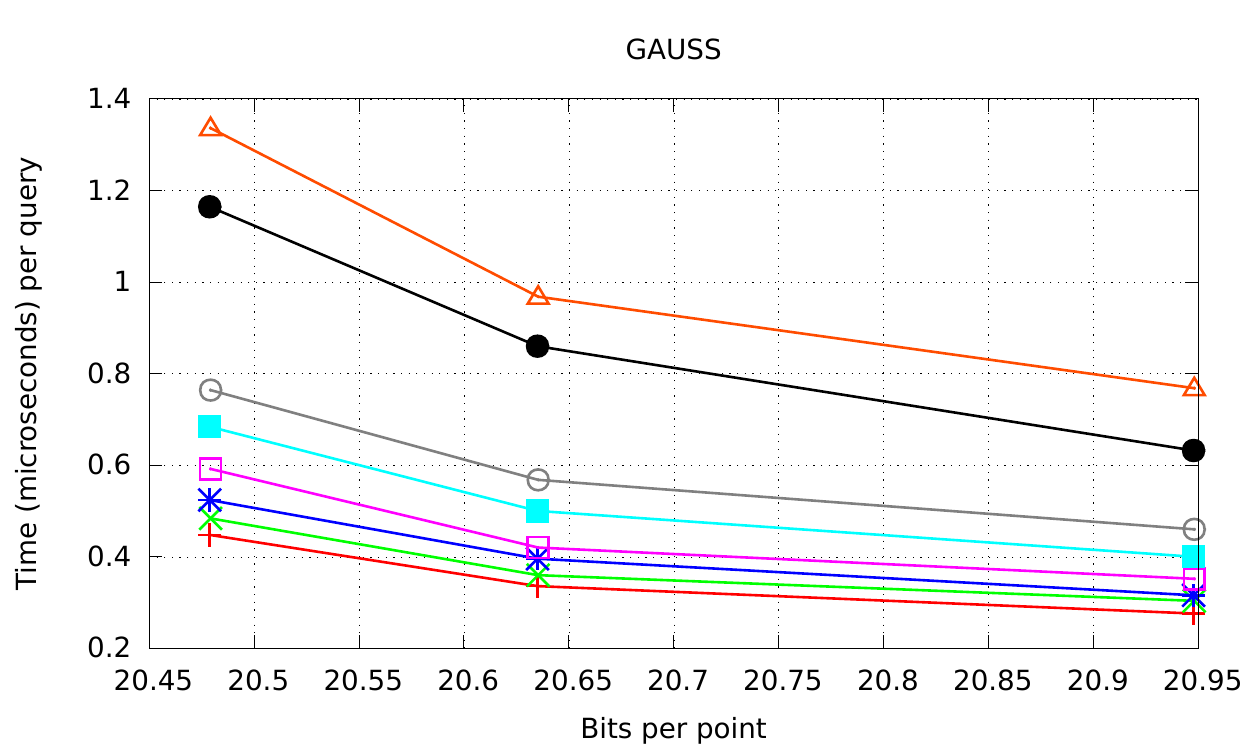}
\includegraphics[scale=0.7]{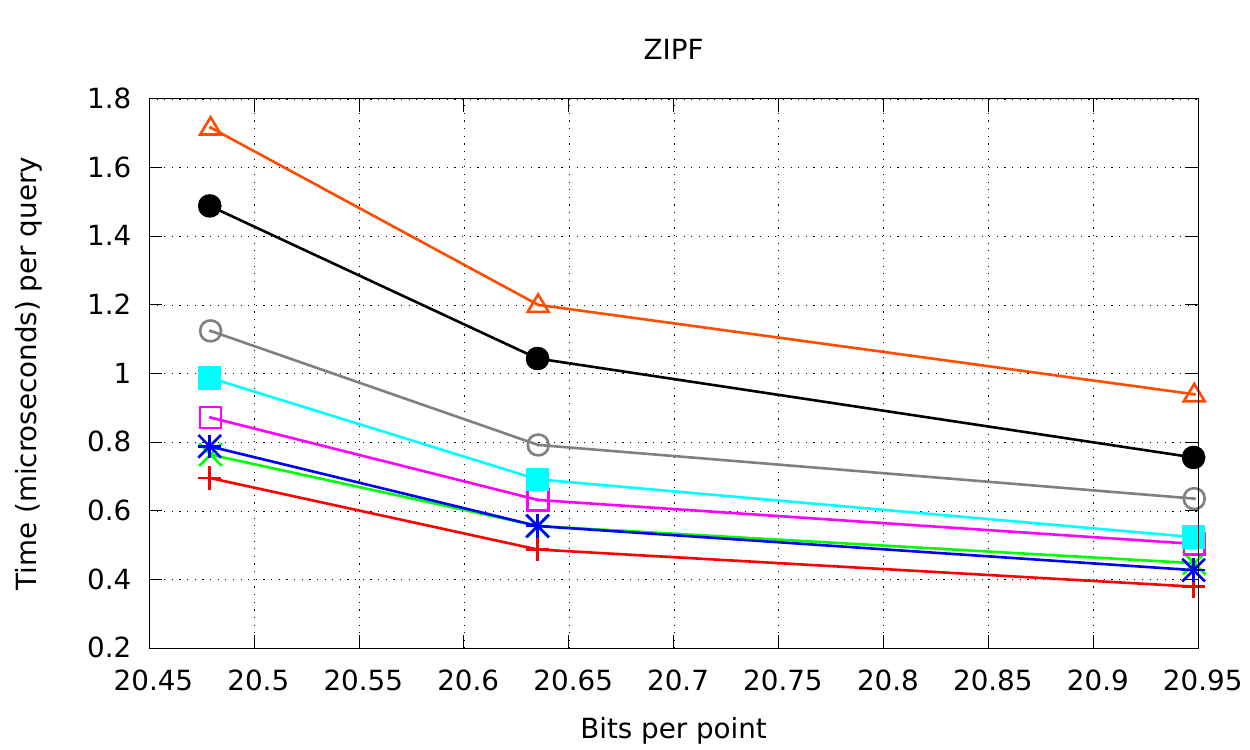}
\includegraphics[scale=0.7]{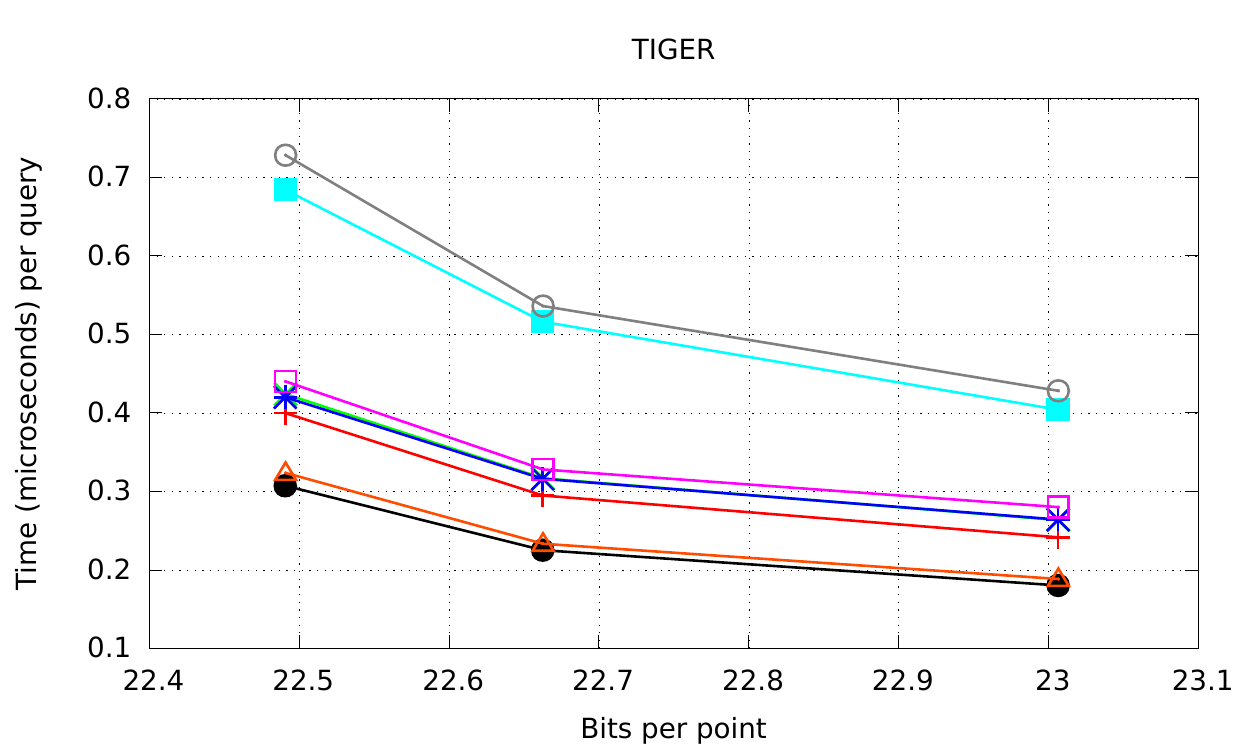}

\end{center}

\caption{Running time per $\rc$ query over the three datasets.}
\label{chart:range.count}
\end{figure}

\begin{figure}[p]
\begin{center}
\includegraphics[scale=0.8]{legend_gauss_zipf.pdf} 
\hspace{1.8cm}
\includegraphics[scale=0.8]{legend_tiger.pdf}\\
\includegraphics[scale=0.7]{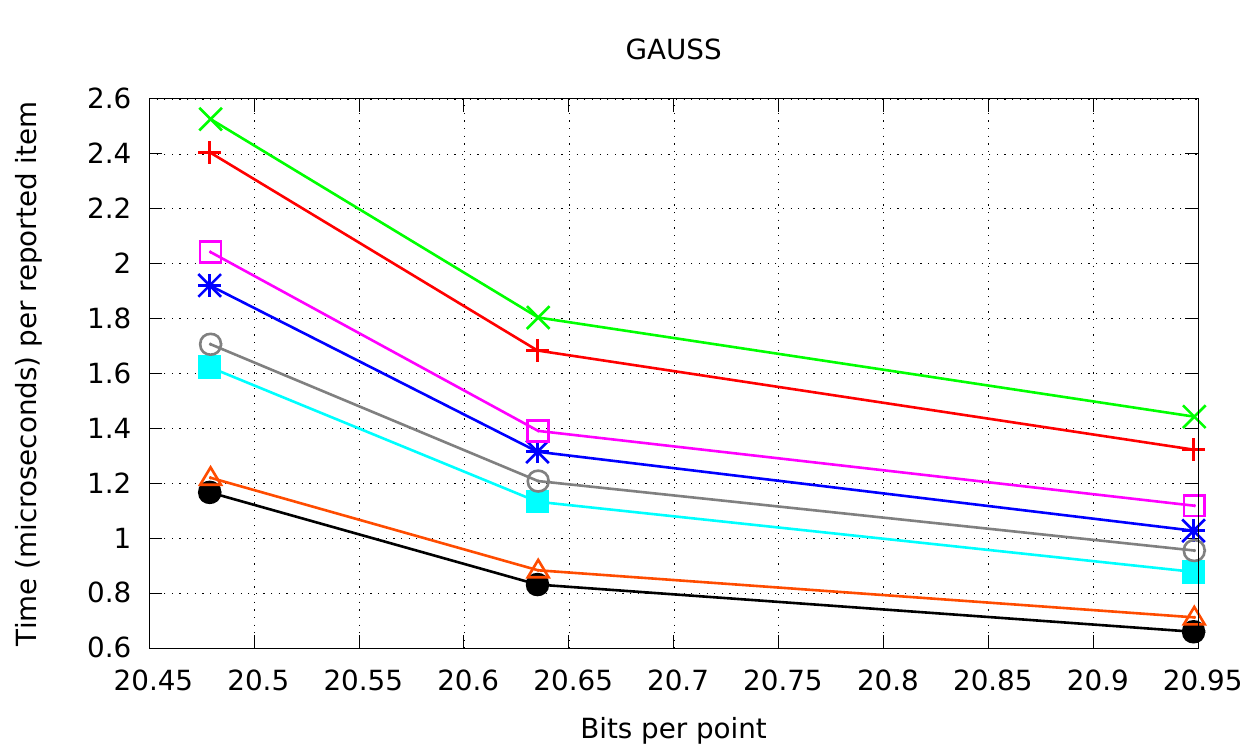}
\includegraphics[scale=0.7]{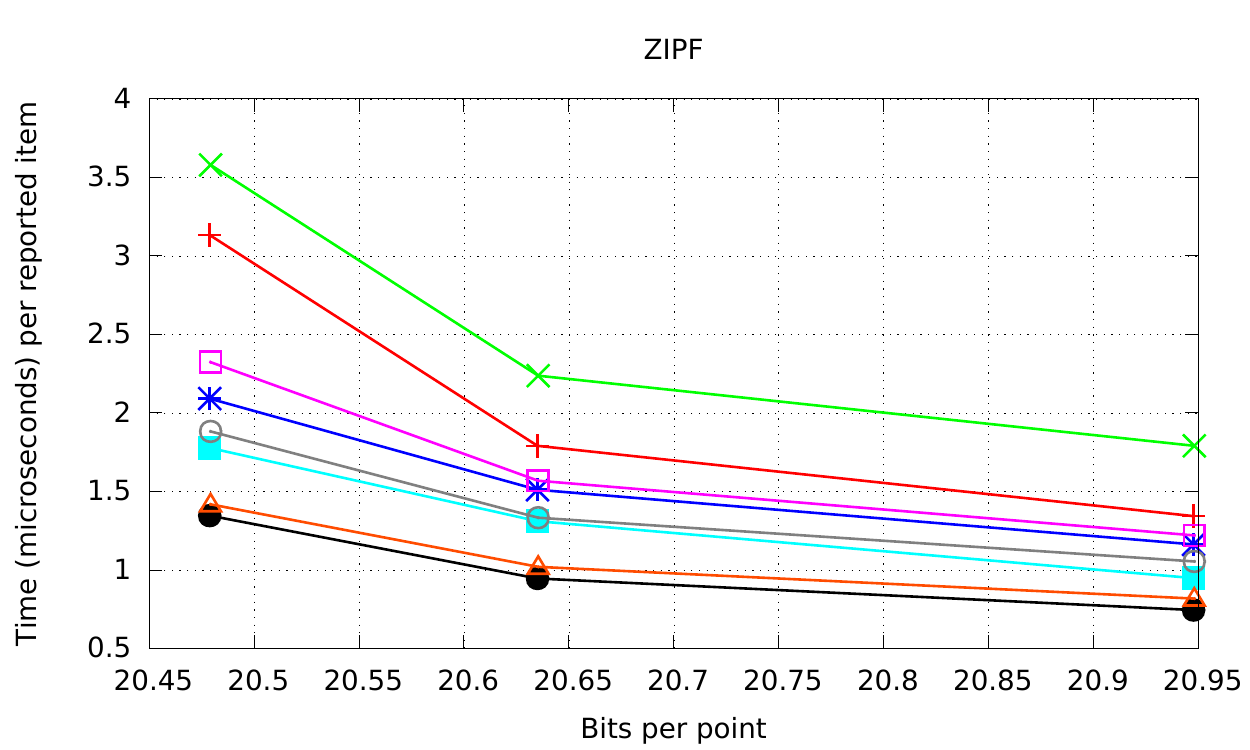}
\includegraphics[scale=0.7]{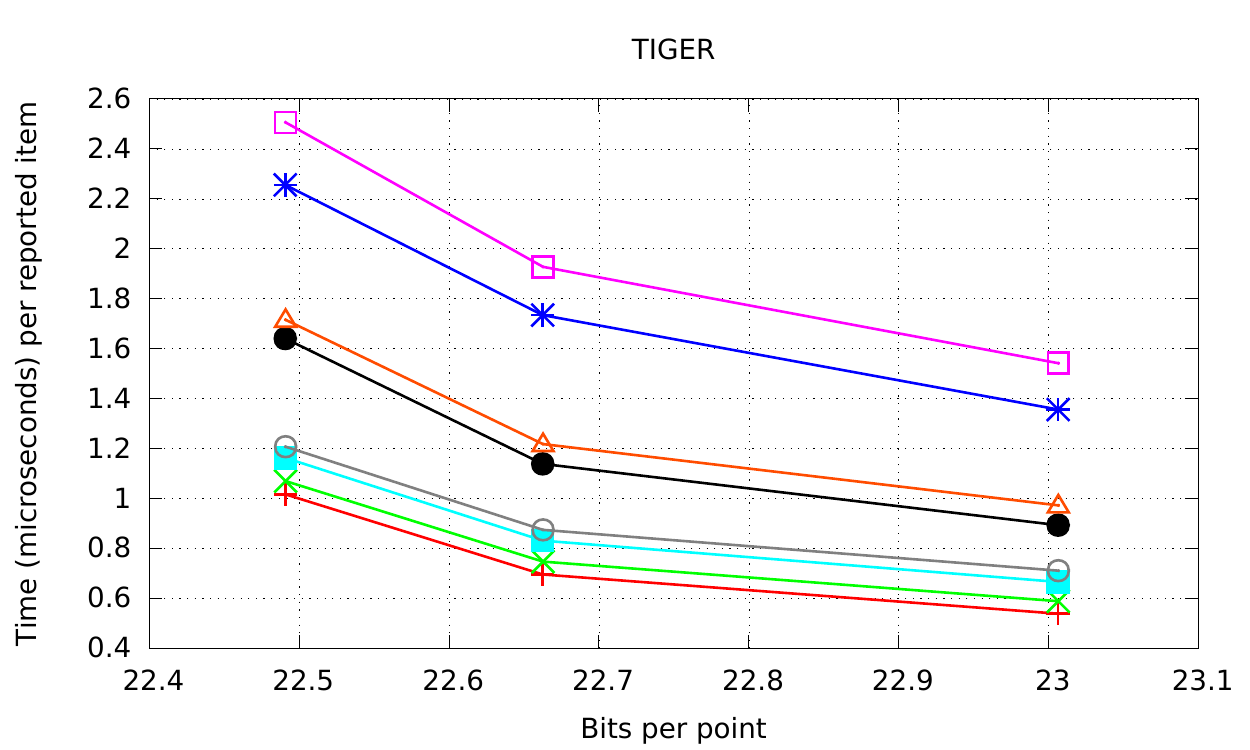}

\end{center}

\caption{Running time of $\rr$ query over the three datasets.}
\label{chart:range.report}
\end{figure}

The results for the counting queries shows that the time worsens as the query
are less selective.
The wavelet matrix is always faster than the pointerless wavelet tree, while
using the same space. The difference in time is proportional to the cost for
each selectivity, but additive with respect to the sampling. For example, it
becomes about 25\% faster when using the most space. We note in passing that
the space is basically 21 bps for the synthetic spaces and 23 bps for the
\verb|Tiger| dataset, which is essentially $\lg\sigma = \lg n$.

In the case of reporting queries, we show the time per reported item, which 
decreases as the query is less selective. Once again the wavelet matrix is
faster than the pointerless wavelet tree, albeit this time by a smaller
margin: usually below 10\%.

\section{Conclusions}

The levelwise wavelet tree \cite{MN06,MN07}, designed to avoid the
$O(\sigma\lg n)$ space overhead of standard wavelet trees \cite{GGV03}, was
unnecessarily slow in practice. We have redesigned this data structure so that
its time overhead over standard wavelet trees is significantly lower. The
result, dubbed {\em wavelet matrix}, enjoys all the good properties of 
levelwise wavelet trees but performs significantly faster in practice. It 
requires $n\lg\sigma + o(n\lg\sigma)$ bits of space, and can be built in 
$O(n\lg\sigma)$ time and almost in-place. We have also shown how to represent
Huffman shaped wavelet trees without using pointers by means of canonical
Huffman codes, and adapted the mechanism to Huffman shaped wavelet matrices.
This required a nontrivial redesign of the variable-length code assignment
mechanism. Our experimental results show that the compressed wavelet matrix
dominates the space/time tradeoff map for all the real-life sequences we
considered, also outperforming in most cases other structures designed for 
large alphabets
\cite{BCGNN13}. We also showed that the wavelet matrix is the best choice to
represent point grids that support orthogonal range queries.

An interesting future work is to adapt multiary wavelet trees \cite{FMMN07} to
wavelet matrices. The only difference is that, instead of a single accumulator 
$z_\ell$ per level, we have an array of $\rho-1$ accumulators in a $\rho$-ary 
wavelet matrix. As the useful values for $\rho$ are $O(\lg n)$, the overall 
space is still negligible, $O(\lg^2 n \lg \sigma)$. The real challenge is to
transalte the reduction in depth into a reduction of actual execution times.

Dynamic wavelet trees \cite{MN08,HM10,NS10} can immediately be translated into
wavelet matrices. It would be interesting to consider newer, theoretically
more efficient dynamic versions \cite{NN13}, and obtain practically efficient
implementations over wavelet matrices.

\paragraph{Acknowledgement.} 
Thanks to Daisuke Okanohara for useful comments.

\bibliographystyle{plain}
\bibliography{paper}

\end{document}